\definecolor{myblue}{HTML}{0072BD}
\definecolor{myred}{HTML}{D95319}
\definecolor{myyellow}{HTML}{EDB120}
\definecolor{mypurple}{HTML}{7E2F8E}
\definecolor{mygreen}{HTML}{77AC30}
\newtheorem{theorem}{Theorem}
\newtheorem{proposition}{Proposition}
\newtheorem{definition}{Definition}
\newtheorem{remark}{Remark}
\newtheorem{assumption}{Assumption}
\newtheorem{lemma}{Lemma}
\newtheorem{objective}{Objective}
\def\BibTeX{{\rm B\kern-.05em{\sc i\kern-.025em b}\kern-.08em
    T\kern-.1667em\lower.7ex\hbox{E}\kern-.125emX}}
\begin{document}
\title{Dynamic State-Feedback Control for LPV Systems: Ensuring Stability and LQR Performance}
\author{A. Gießler, F. Strehle, J. Illerhaus, and S. Hohmann
\thanks{A. Gießler, J. Illerhaus, and S. Hohmann are with the Institute of Control Systems, Karlsruhe Institute of Technology, 76131 Karlsruhe, Germany (e-mail: \{armin.giessler, jochen.illerhaus, soeren.hohmann\}@kit.edu).}
\thanks{F. Strehle was with the Institute of Control Systems, Karlsruhe Institute of Technology, 76131 Karlsruhe, Germany. He is now with Netze BW, 70567 Stuttgart, Germany  (e-mail: f.strehle@netze-bw.de).}}

\maketitle

\thispagestyle{firstpage}
\begin{abstract}
In this paper, we propose a novel dynamic state-feedback controller for polytopic linear parameter-varying (LPV) systems with constant input matrix. 
The controller employs a projected gradient flow method to continuously improve its control law and, under established conditions, converges to the optimal feedback gain of the corresponding linear quadratic regulator (LQR) problem associated with constant parameter trajectories.
We derive conditions for quadratic stability, which can be verified via convex optimization, to ensure exponential stability of the LPV system even under arbitrarily fast parameter variations. 
Additionally, we provide sufficient conditions to guarantee the boundedness of the trajectories of the dynamic controller for any parameter trajectory and the convergence of its feedback gains to the optimal LQR gains for constant parameter trajectories. 
Furthermore, we show that the closed-loop system is asymptotically stable for constant parameter trajectories under these conditions.   
Simulation results demonstrate that the controller maintains stability and improves transient performance.
\end{abstract}

\begin{IEEEkeywords}
Linear parameter-varying (LPV) systems, linear quadratic regulator (LQR), 
dynamic state-feedback control, projected gradient flow.
\end{IEEEkeywords}

\section{Introduction}
\label{sec:introduction}

Systems with varying dynamics due to parameter variations are prevalent in control engineering and find numerous applications in domains such as aerospace, automotive, robotics, and power systems \cite{Hoffmann}.
Linear Parameter-Varying (LPV) systems have emerged as a powerful framework for modeling and controlling certain classes of nonlinear dynamical systems subject to parameter variations. This framework employs linear state-space models whose matrices are functions of time-varying parameters \cite{mohammadpour2012control,briat_linear_2015}. 
Traditional LPV systems incorporate exogenous, typically measurable parameters, whereas quasi-LPV systems feature parameters that depend on states, inputs, or outputs \cite{Leith01012000}. This flexibility allows \mbox{(quasi-)} LPV systems to capture a wide range of nonlinear behaviors while retaining the mathematical tractability and applicability of linear system methods. 

It is widely recognized that parameter variations can induce instability in an LPV system, even if all frozen-time systems\footnote{Frozen-time systems of an LPV system refer to the family of linear time-invariant (LTI) systems obtained by fixing the parameter trajectory at a constant value within the defined parameter set \cite{shamma1988phd,briat_linear_2015}.} are asymptotically stable\footnote{Strictly speaking, the asymptotic stability of the LPV system refers to the stability of its equilibrium point at the origin.} \cite{Shamma1992}. In the context of LPV systems, a distinction is made between slowly-varying and (arbitrarily) fast-varying parameters \cite[Sec. 1.2.2]{mohammadpour2012control}. 
Quadratic stability is of particular importance, as it ensures exponential stability of an LPV system for both slow and fast parameter variations by finding a parameter-independent Lyapunov function valid across all admissible parameter trajectories \cite[Sec. 2.3.1]{briat_linear_2015}\cite[Sec. 1.2]{wu1995phd}.
Further stability conditions for systems with slowly-varying parameters are presented in \cite{ILCHMANN1987} and \cite[Sec. 1.2.2]{mohammadpour2012control}.

\subsection{Literature Review}
Drawing on quadratic stability, extensive research has been conducted on stabilizing control methods for LPV systems \cite[Sec. 3.1]{mohammadpour2012control}\cite[Sec. 1.1]{white_linear_2013}.
The classical approach is gain scheduling,
where the feedback gain adapts to varying parameters \cite{Leith01012000,rugh_research_2000}. 
Most methods exploit a common Lyapunov function for quadratic stability \cite{Leith01012000}, with some exceptions that employ parameter-dependent \cite{wu96Induced,apkarian_advanced_1998,Feron1996} 
or polyhedral Lyapunov functions \cite{Blanchini2003}.
Gain-scheduled controllers are broadly categorized into state-feedback \cite{SHAHRUZ1992,wu1995phd}, output-feedback \cite{scherer1996mixed} and dynamic output-feedback approaches \cite{BECKER1994,wu1995phd,apkarian_advanced_1998,Apkarian1995,APKARIAN19951251,wu96Induced,Courties1999}.

Many dynamic output feedback controllers guarantee a bounded induced $L_2$-norm performance from disturbance to error signals, utilizing robust $H_\infty$ control techniques for controller synthesis \cite{APKARIAN19951251,Apkarian1995,wu96Induced,apkarian_advanced_1998}.
However, while robustness to disturbances is a common focus, relatively few methods address $H_2$ performance objectives \cite{Courties1999,scherer1996mixed}. 
The mixed $H_2 /H_\infty$ output feedback control in \cite{scherer1996mixed} represents a compromise between $H_2$ and $H_\infty$ performance. 
The $H_2$ dynamic output feedback control in \cite{Courties1999} is typically not optimal because blending the optimal controllers at the vertices of the parameter set does not necessarily yield the optimal performance. 
To the best of our knowledge, no existing controller for LPV systems simultaneously minimizes the LQR objective\footnote{Note that the LQR objective is a special case of $H_2$ minimization under state feedback \cite{Feron}.} while guaranteeing stability.
\subsection{Contributions}
The contributions of this paper are fourfold:
\begin{enumerate}
    \item We propose a novel dynamic state-feedback controller tailored to polytopic LPV systems with a constant input matrix. 
    The controller employs a projected gradient flow, continuously improves its control law and, under established conditions, converges to the optimal LQR feedback gain for  constant parameter trajectories.
   \item  \label{item:second}
   Under the assumption of bounded trajectories of the dynamic controller, we derive sufficient and necessary conditions for the quadratic stability of the LPV system that are verifiable via convex optimization.
   \item \label{item:third} We establish sufficient conditions to ensure the boundedness of the trajectories of the dynamic controller for arbitrarily fast parameter variations. Additional conditions are provided to ensure the convergence of the feedback gains to the optimal LQR gains under constant parameter trajectories.
   \item \label{item:fourth}We show the asymptotic stability of the frozen-time closed-loop systems under the given conditions. 
 \end{enumerate}
 
\subsection{Paper Organization} 
This section continuous with notation. Preliminaries are presented in Section \ref{sec:pre}. Section \ref{sec:2} defines the dynamic state-feedback controller and the closed-loop system, and state the main control objective. In Section \ref{sec:stability}, the stability of the closed-loop system and the optimality of the controller are analyzed. Section \ref{sec:4} presents simulation results. Finally, Section \ref{sec:conc} ends with a conclusion.

\subsection{Notation}
The set of (nonnegative) real numbers is denoted by $\mathbb{R}$ $(\mathbb{R}_{\geq 0})$.
For a matrix $A$, $A^\top$, $\operatorname{rank}(A)$, $\operatorname{tr}(A)$, $\operatorname{vec}(A)$, and $\operatorname{det}(A)$ denote its transpose, rank, trace, vectorization, and determinant, respectively. 
A symmetric positive definite (semidefinite) matrix $A$ is denoted by $A\succ 0$ $(A\succeq 0)$. The gradient of a scalar function $f$ with respect to a vector $x$ (matrix $A$) is denoted by $\nabla f(x)$ ($\nabla f(A)$), and the Jacobian matrix of a vector function $g$ is given by $J_g$. 
The operator $ \operatorname{diag}(\cdot)$ constructs a diagonal matrix from a vector.
The interior and boundary of a set $\mathcal{S}$ are denoted by $\operatorname{int}(\mathcal{S})$ and $\partial \mathcal{S}$, respectively. 
The cardinality of a set $D$ is denoted by~$\vert D \vert$.

\section{Preliminaries}
\label{sec:pre}
\subsection{LPV Systems}
Given a compact parameter set $\mathcal{P}\subset \mathbb{R}^p$, the parameter variation set $\mathcal{F}_{\mathcal{P}}$ denotes the set of all admissible parameter trajectories $\rho:\mathbb{R}_{\geq 0}\to \mathcal{P}$. The set $\mathcal{F}_{\mathcal{P}}^\infty$ contains all piecewise continuous trajectories $\rho(t)$ with a finite number of discontinuities, representing arbitrarily fast-varying parameters.
For slowly-varying parameters, $\mathcal{F}_{\mathcal{P}}^\mathcal{V}$ contains continuous trajectories $\rho(t)$, where $\dot \rho(t)$ is element of a convex and compact polytope $\mathcal{V}$ containing 0. The set $\mathcal{F}_{\mathcal{P}}^{\text{pc}}$ contains all piecewise constant trajectories $\rho(t)$. Hence, $\mathcal{F}_{\mathcal{P}}^{\text{pc}}\subset \mathcal{F}_{\mathcal{P}}^\infty$ and $\mathcal{F}_{\mathcal{P}}^\mathcal{V}\subset \mathcal{F}_{\mathcal{P}}^\infty$.

A generic autonomous LPV system is given by 
\begin{align}
   \dot x(t) = A(\rho(t))x(t), \quad x(0)=x_0,\label{math:LPV} 
\end{align}
where $\rho(t)\in\mathcal{F}_{\mathcal{P}}^\infty$\footnote{For this family of parameter trajectories, solutions of \eqref{math:LPV} exists for all time in the Carathéodory sense \cite[Th. 1.1 of Ch. 2]{coddington_theory_1987}.} and $A: \mathbb{R}^p \to \mathbb{R}^{n\times n}$ is a continuous function. The system \eqref{math:LPV}, more precisely the equilibrium at the origin, is called asymptotically frozen-time stable if the system matrix $A(\rho)$ has eigenvalues with strictly negative real parts for all $\rho\in\mathcal{P}$.

\begin{proposition}[\texorpdfstring{\cite[Prop. 1.1]{mohammadpour2012control}}{}]
   \label{prop:quad}
   If there exists a matrix $X\succ 0$ such that 
   \begin{align}
      A(\rho)^\top X + X A(\rho) \prec 0 \quad \forall \rho \in \mathcal{P},
   \end{align} 
   then the system \eqref{math:LPV} is exponentially stable for all $\rho(t)\in \mathcal{F}_{\mathcal{P}}$ and is called quadratically stable.
\end{proposition}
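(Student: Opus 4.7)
The plan is to use $V(x) = x^\top X x$ as a common quadratic Lyapunov function, show that its derivative along every admissible trajectory is strictly negative with a uniform rate, and then apply a standard comparison argument to obtain exponential decay of $\|x(t)\|$.

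First, I would record the two-sided bound $\lambda_{\min}(X)\,\|x\|^2 \le V(x) \le \lambda_{\max}(X)\,\|x\|^2$, which follows from $X\succ 0$. Next, along any Carathéodory solution $x(t)$ of \eqref{math:LPV}, the function $t\mapsto V(x(t))$ is absolutely continuous, and for almost every $t$
\begin{equation*}
   \dot V(x(t)) = x(t)^\top\bigl(A(\rho(t))^\top X + X A(\rho(t))\bigr)x(t).
\end{equation*}
The central step is to convert the pointwise strict inequality in the hypothesis into a uniform one. Define $Q(\rho) := -\bigl(A(\rho)^\top X + X A(\rho)\bigr)$. Since $A$ is continuous, $Q:\mathcal{P}\to \mathbb{R}^{n\times n}$ is continuous, and by hypothesis $Q(\rho)\succ 0$ for every $\rho\in\mathcal{P}$. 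Because $\mathcal{P}$ is compact, the map $\rho \mapsto \lambda_{\min}(Q(\rho))$ is continuous on a compact set and attains its minimum, so there exists $\alpha>0$ with $Q(\rho)\succeq \alpha I$ for all $\rho\in\mathcal{P}$. This uniform bound is the key to handling arbitrarily fast parameter variations: it holds for every value that $\rho(t)$ may ever take, regardless of its time behaviour.

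With this in hand, for almost every $t\ge 0$ I have
\begin{equation*}
   \dot V(x(t)) \le -\alpha\,\|x(t)\|^2 \le -\frac{\alpha}{\lambda_{\max}(X)}\,V(x(t)).
\end{equation*}
Setting $\beta := \alpha/\lambda_{\max}(X)>0$ and applying the comparison lemma (or Grönwall's inequality) to the absolutely continuous function $V(x(\cdot))$ yields $V(x(t))\le e^{-\beta t}V(x_0)$. Combining with the two-sided bound on $V$ gives
\begin{equation*}
   \|x(t)\|\le \sqrt{\tfrac{\lambda_{\max}(X)}{\lambda_{\min}(X)}}\,e^{-\beta t/2}\,\|x_0\|,
\end{equation*}
which is the required exponential estimate, uniform in $\rho(\cdot)\in\mathcal{F}_{\mathcal{P}}$.

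I do not expect a serious obstacle. The only subtle point is justifying that the differential inequality for $V$ implies the integral bound when $\rho(\cdot)$ is merely piecewise continuous; this is handled by invoking absolute continuity of $V\circ x$ (Carathéodory solutions are absolutely continuous by construction) together with the comparison lemma in its $L^1$-coefficient form, so no regularity of $\rho$ beyond membership in $\mathcal{F}_{\mathcal{P}}^\infty$ is needed.
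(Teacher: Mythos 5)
Your proof is correct and complete. The paper itself gives no proof of this proposition---it is imported verbatim from Prop.~1.1 of the cited reference---but your argument (common quadratic Lyapunov function $V(x)=x^\top X x$, a uniform bound $Q(\rho)\succeq\alpha I$ obtained from continuity of $A$ and compactness of $\mathcal{P}$, and then the comparison lemma applied to the absolutely continuous function $V\circ x$) is exactly the standard proof of quadratic stability, and your handling of the Carath\'eodory-solution technicality is sound.
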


If the function $A(\rho(t))$ is affine in $\rho(t)$, i.e., 
\begin{align}
   A(\rho(t)) = A_0 + \sum_{i=1}^{p} \rho_i(t) A_i, \label{math:affine}
\end{align}
and $\mathcal{P}$ is a convex and compact polytope, the system \eqref{math:LPV} is called polytopic LPV system. 
By using the standard $N$-unit simplex, defined as $\Lambda_N \!= \!\{x\in\mathbb{R}^N_{\geq 0}\!: \! \sum_i x_i =1\}$, polytopic LPV systems can be represented by a time-varying convex combination of linear time-invariant (LTI) systems
\begin{align}
   \label{math:LTI_comb}
   A(\rho(t))=\sum_{i=1}^{\vert V \vert} \lambda_i(t) \tilde{A}_i, \quad \lambda(t)\in\Lambda_{\vert V \vert},
\end{align}
where $V= \operatorname{vert }(\mathcal{P})$ denotes the vertices of $\mathcal{P}$.

\begin{theorem}[\texorpdfstring{\cite[Thm. 2.5.1]{briat_linear_2015}}{}]
   \label{theo:poly_quad}
   The polytopic LPV system $\dot x = A(\rho(t)) x$ is quadratically stable if and only if there exists a matrix $X\succ 0$ such that $\tilde{A_i}^\top X + X \tilde{A_i} \prec 0 $ hold for all $i=1,\dots, \vert V \vert$.
\end{theorem}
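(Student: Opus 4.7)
The statement is an if-and-only-if, and both directions should fall out cleanly from combining Proposition~\ref{prop:quad} with the polytopic representation \eqref{math:LTI_comb}. I would organize the proof as two short implications.

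For the sufficiency direction, I would start by assuming the existence of $X \succ 0$ such that $\tilde A_i^\top X + X\tilde A_i \prec 0$ for every vertex index $i = 1,\dots,|V|$. The goal is then to verify the hypothesis of Proposition~\ref{prop:quad}, namely $A(\rho)^\top X + X A(\rho) \prec 0$ for every $\rho \in \mathcal{P}$. Fix an arbitrary $\rho \in \mathcal{P}$ and use the polytopic representation \eqref{math:LTI_comb} to write $A(\rho) = \sum_{i=1}^{|V|} \lambda_i \tilde A_i$ with $\lambda \in \Lambda_{|V|}$. Linearity of the Lyapunov map gives
\begin{align}
  A(\rho)^\top X + X A(\rho) = \sum_{i=1}^{|V|} \lambda_i \bigl(\tilde A_i^\top X + X \tilde A_i\bigr),
\end{align}
which is a convex combination of negative definite matrices with at least one strictly positive weight (since $\sum_i \lambda_i = 1$), hence negative definite. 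Proposition~\ref{prop:quad} then yields quadratic stability.

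For the necessity direction, I would assume quadratic stability and directly invoke the definition: there exists $X \succ 0$ such that $A(\rho)^\top X + X A(\rho) \prec 0$ for every $\rho \in \mathcal{P}$. Since each vertex $v_i \in V \subset \mathcal{P}$ satisfies $A(v_i) = \tilde A_i$ by \eqref{math:LTI_comb} (evaluating the convex combination at $\lambda = e_i$), specializing the inequality to $\rho = v_i$ gives $\tilde A_i^\top X + X \tilde A_i \prec 0$ for all $i$.

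There is no real obstacle here, and the proof is essentially bookkeeping. The only subtlety worth flagging is making sure strict definiteness is preserved under the convex combination in the sufficiency step; this is why the observation that at least one weight $\lambda_i$ must be strictly positive (any strictly negative definite summand alone dominates the strictly semi-definite inequality) matters, and why the argument does not work if we only had negative semidefiniteness at the vertices.
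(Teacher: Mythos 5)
Your proof is correct. Note that the paper does not prove this theorem itself --- it is imported verbatim from \cite[Thm.~2.5.1]{briat_linear_2015} as a preliminary --- but your two-direction argument (convexity of the Lyapunov inequality over the simplex for sufficiency, specialization to the vertices for necessity) is exactly the standard proof given in that reference, and your attention to why strict definiteness survives the convex combination is the right point to flag.
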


\subsection{Policy Gradient Flow for Linear Quadratic Regulator}
\label{subsubsec:policy_gradient}
We briefly recall the LQR problem for LTI systems
   \begin{subequations}
            \label{math:LQR}
   \begin{align}
      \min_{x,u}~ & \int_{0}^{\infty} u^\top(t) R u(t) + x^\top(t) Q x(t) \mathrm{d}t \\
      \text{s.t.}~ & \dot x(t) = A x(t)+B u(t), \\
      & x(0) = x_0,
   \end{align}
\end{subequations}
where $Q\succeq 0$ and $R\succ 0$ are weighting matrices.
If the system $(A,B)$ is stabilizable and the pair $(A,\sqrt{Q})$ is detectable
, the optimal solution $u^*=-K^* x$ stabilizes $(A,B)$. The optimal feedback is $K^*=R^{-1}B^\top P^*$, where $P^*$ is the unique and positive definite solution of the CARE
\begin{align}
   \label{math:CARE}
   A^\top P + PA - PB R^{-1} B^\top P +Q = 0. 
\end{align}
%
By substituting $K = R^{-1}B^\top P$, \eqref{math:CARE} can be rewritten as 
\begin{align}
   A_K^\top P + P A_K + Q + K^\top R K  &=0,  \label{math:Ly}
\end{align}
where  $A_K := A-BK$. Note that \eqref{math:Ly} has the form of a Lyapunov equation if $K$ is considered fixed, i.e., independent of $P$.
The problem \eqref{math:LQR} can be solved using the policy gradient flow from \cite{bu2020clqr}, in which the LQR cost is parametrized in $K$
\begin{align}
   \label{math:LQR_cost}
   f_K = x_0^\top P_K x_0 = \operatorname{tr}(P_K x_0 x_0^\top),
\end{align}
where $P_K$ 
is the solution of \eqref{math:Ly} for a (non-optimal) feedback $K$. Since the optimal feedback $K^*$ is independent of the initial state, we set $x_0 x_0^\top = I_n$.\footnote{Strictly speaking, $x_0 x_0^\top=I_n$ is an abuse of notation, used here for simplicity. See \cite[Sec. 3.3]{bu2020clqr} for details.}
The set of Hurwitz stable feedback gains of the system $(A,B)$ is path-connected, unbounded \cite[Section 3]{bu2019} and is denoted by 
\begin{align}
   \mathcal{K} & = \{ K \in \mathbb{R}^{m \times n} : \operatorname{Re}(\lambda_i(A-BK)) < 0 ~ \forall i \}. \label{math:K}
\end{align}
The LQR cost $f_K \!= \!\operatorname{tr}(P_K)$ is minimized by the gradient flow
\begin{align}
      \dot K = - \nabla f_K, \quad  K(0)  = K_0\in\mathcal{K},  \label{math:grad}
\end{align}
where $\nabla f_K  = 2 \left( R K - B^\top P_K \right) Y_K$ and $ Y_K  = \int_0^\infty e^{A_K t} I_n e^{A_K^\top t}\mathrm{d}t$ \cite[Lemma~4.1]{bu2020clqr}.
The matrix $Y_K$ can alternatively be obtained by solving the Lyapunov equation 
\begin{align}
   A_K Y_K + Y_K A_K^\top + I_n = 0. \label{math:Ly2}
\end{align}
The gradient flow \eqref{math:grad} is a closed-form ordinary differential equation (ODE) with a rational function  on the right-hand side, is well-defined for $K\in\mathcal{K}$, and generates trajectories $K(t)$ that remain in $\mathcal{K}$.

\subsection{Projected Gradient Flow}
\label{subsec:proj}
Consider the inequality-constrained optimization problem 
\begin{align}
   \min_{x\in\mathbb{R}^n}  ~f(x) \quad  \text{s.t.} \quad  g(x) \geq  0, \label{math:opt1}
\end{align}
where $f:\mathbb{R}^n\to \mathbb{R}$ and $g:\mathbb{R}^n\to \mathbb{R}^l$ are smooth functions, i.e., $f,g \in C^\infty(\mathbb{R}^n)$. 
The feasible region is defined by the  manifold $\mathcal{M}=\{x\in\mathbb{R}^n \mid g(x) \geq 0 \}$ which is assumed to be compact and connected. Additionally, it is assumed that the Linear Independence Constraint Qualification (LICQ) is satisfied at all $x\in \mathcal{M}$ 
and that all Karush-Kuhn-Tucker (KKT) points are nondegenerate critical points 
(see \cite[Sec. 2]{jongen2003constrained}).
The projection of $\nabla f(x)$ onto the tangent cone  $C_x \mathcal{M}$ of $\mathcal{M}$ at $x\in\mathcal{M}$, denoted by $\nabla_\mathcal{M} f(x)$, is the 
unique solution to 
\begin{align}
   \min_\xi \Vert \xi - \nabla f(x) \Vert_2 \quad \text{s.t.} \quad \xi \in C_x \mathcal{M}. \label{math:proj}
\end{align}
It is explicitly given by $\nabla_\mathcal{M} f(x)=M(x)\nabla f(x)$ \cite{jongen2003constrained}, where 
   \begin{align}
      M(x) &= I_n - J_g^\top(x) F(x)^{-1} J_g(x), \label{math:projected_grad} \\
      F(x) &= 2 \operatorname{diag }\left(g\left(x\right)\right) + J_g(x) J_g^\top(x). \label{math:f_x}
   \end{align}
The dynamics of the projected gradient flow is given by 
\begin{align}
   \dot x = - \nabla_\mathcal{M} f(x)= - M(x) \nabla f(x), \quad  x(0)\in \mathcal{M}.  
\end{align}
\begin{proposition}[\texorpdfstring{\cite[Prop. 3.2]{jongen2003constrained}}{}]
   \label{prop:invariant}
   The vector field $x\mapsto -\nabla_\mathcal{M} f(x)$ is $C^\infty$ smooth on $\mathcal{M}$, and it induces a smooth trajectory $x(t)$ with $\operatorname{int }(\mathcal{M})$ and $\partial \mathcal{M}$ as invariant manifolds.
\end{proposition}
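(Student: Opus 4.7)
The plan is to prove the proposition in three stages: first establish $C^\infty$ smoothness of the vector field on $\mathcal{M}$ via invertibility of $F(x)$, then derive a clean scalar ODE satisfied by $g_i(x(t))$ along the flow, and finally use that ODE to conclude invariance of both the interior and the boundary.

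For smoothness, the only concern is whether $F(x)^{-1}$ is well-defined for $x\in\mathcal{M}$. I would show $F(x)\succ 0$ on $\mathcal{M}$: for any $y\neq 0$,
$$y^\top F(x)\, y \;=\; 2\sum_i g_i(x)\,y_i^2 \;+\; \Vert J_g(x)^\top y\Vert_2^2 \;\geq\; 0,$$
with equality iff $y$ is supported on the active index set $I_0(x)=\{i:g_i(x)=0\}$ and $J_{g_{I_0}}(x)^\top y_{I_0}=0$. LICQ at $x$ states that the rows of $J_{g_{I_0}}(x)$ are linearly independent, which forces $y_{I_0}=0$ and hence $y=0$. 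Thus $F(x)\succ 0$, and consequently $F^{-1}$, $M$, and $-M\nabla f$ are $C^\infty$ on $\mathcal{M}$. Standard ODE theory then yields a unique smooth local solution through every $x(0)\in\mathcal{M}$; compactness of $\mathcal{M}$ combined with the invariance argument below extends each solution to all $t\geq 0$.

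The crucial step is the algebraic identity
$$J_g(x)\,M(x) \;=\; 2\operatorname{diag}\!\bigl(g(x)\bigr)\,F(x)^{-1}J_g(x),$$
obtained by substituting $J_g J_g^\top = F - 2\operatorname{diag}(g)$ into $J_g M = J_g - J_g J_g^\top F^{-1} J_g$ and cancelling the two $J_g$ terms. Applying both sides to $\nabla f(x)$ and reading off the $i$-th entry, the function $h_i(t):=g_i(x(t))$ satisfies along the flow
$$\dot h_i(t) \;=\; \nabla g_i(x(t))^\top \dot x(t) \;=\; -2\,g_i(x(t))\,\phi_i(t) \;=\; -2\,h_i(t)\,\phi_i(t),$$
where $\phi_i(t)$ denotes the $i$-th component of $F(x(t))^{-1}J_g(x(t))\nabla f(x(t))$. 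This is a scalar linear ODE with solution $h_i(t)=h_i(0)\exp\!\bigl(-2\int_0^t\phi_i(\tau)\,d\tau\bigr)$, so the sign of $h_i$ is preserved for every $i$.

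Invariance then follows immediately: if $x(0)\in\operatorname{int}(\mathcal{M})$ then $h_i(0)>0$ for every $i$, so $h_i(t)>0$ for all $t$ and $x(t)\in\operatorname{int}(\mathcal{M})$; if $x(0)\in\partial\mathcal{M}$ then at least one $h_i(0)=0$, which forces $h_i(t)\equiv 0$, so $x(t)\in\partial\mathcal{M}$ (in fact the active index set is preserved, which is stronger than required). The main obstacle I anticipate is spotting the identity $J_g M = 2\operatorname{diag}(g)\,F^{-1}J_g$ and recognizing that it reduces the invariance question to a one-dimensional linear ODE in each $h_i$; once this is in hand, both smoothness (via positive definiteness of $F$) and invariance (via the exponential solution formula) follow cleanly.
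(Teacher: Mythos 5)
The paper does not prove this proposition; it imports it verbatim from the cited reference (Jongen and Stiefenhofer, Prop.~3.2), so there is no in-paper argument to compare against. Your proof is correct and is essentially the standard argument behind that result: positive definiteness of $F(x)$ on $\mathcal{M}$ follows exactly as you say from LICQ applied to the active index set (note this is more general than the full-rank hypothesis the paper later uses in its Lemma on $M(x)$), the algebraic identity $J_g M = 2\operatorname{diag}(g)F^{-1}J_g$ is verified correctly, and reducing invariance to the scalar linear ODE $\dot h_i = -2h_i\phi_i$ with its explicit exponential solution is the right mechanism --- it even yields the stronger conclusion that each stratum of $\mathcal{M}$ (fixed active set) is invariant. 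The only point worth making explicit is that local existence of $x(t)$ requires the vector field on an open set: since $F\succ 0$ on the compact set $\mathcal{M}$ and $F$ is continuous, $F$ remains invertible on an open neighborhood of $\mathcal{M}$, the identity and the ODE for $h_i$ hold there as well, and your sign-preservation argument then confines the trajectory to $\mathcal{M}$, after which compactness gives global existence as you note.
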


The optimization problem \eqref{math:opt1} can be locally solved using \eqref{math:proj_grad}, given that the initial state $x(0)$ and the optimal solution $x^*$ reside within the same subset $\partial \mathcal{M}$ or $\operatorname{int }(\mathcal{M})$.

\section{Problem Formulation and Approach}
\label{sec:2}
In this section, we define the LPV system, the dynamic state-feedback controller, and the closed-loop system. Additionally, we state the main objective of the controller design. 

The polytopic LPV system $\left(A(\rho(t)),B\right)$ considered in this work obeys the dynamics
\begin{align}
   \dot x  = A\left(\rho(t)\right) x + B u,\quad x(0) = x_0, \label{math:open_loop}
\end{align}
where $\rho(t)\in\mathcal{F}_{\mathcal{P}}^\infty$, $\mathcal{P}$ is a convex and compact polytope,
$A: \mathbb{R}^p \to \mathbb{R}^{n\times n}$ is a continuous function affine in $\rho(t)$ (see~\eqref{math:affine}), and $B\in\mathbb{R}^{n\times m}$ is constant. 
The dynamic state-feedback controller takes the form $u(t) = - K(t)x(t)$, where the time-varying gain $K(t)$ evolves according to an ODE.

\begin{objective}
   \label{obj:1}
Design a dynamic state-feedback controller that guarantees asymptotic stability of the system \eqref{math:open_loop} under arbitrarily fast parameter variations $\rho(t)\in\mathcal{F}_{\mathcal{P}}^\infty$ while simultaneously minimizing the LQR cost. 
Specifically, the trajectory $K(t)$ of the controller state should converge to the optimal LQR feedback gain for constant parameter trajectories $\rho(t)=\rho \in\mathcal{P}$.
\end{objective}

 To ensure that the infinite-horizon LQR optimization problem is well-posed and that the optimal solution yields a stabilizing controller, we make the following assumption, which will hold throughout the remainder of this paper.
\begin{assumption}
   \label{ass:1}
   The weighting matrices satisfy $R \succ 0$ and $Q \succeq 0$. Additionally, the system $(A(\rho), B)$ is stabilizable, and the pair $(A(\rho), \sqrt{Q})$ is detectable for all $\rho \in \mathcal{P}$.
\end{assumption}

In order to achieve the Objective~\ref{obj:1}, we propose the following dynamic controller that adapts the policy gradient flow (see Subsection \ref{subsubsec:policy_gradient}) from \cite{bu2020clqr} to LPV systems and incorporates the projection method (see Subsection \ref{subsec:proj}) from \cite{jongen2003constrained}.

\begin{definition}[Dynamic state-feedback controller]
   \label{def:dyn}
   Let $\mathcal{C}\subseteq \mathbb{R}^{mn}$ be the hyperrectangle
   \begin{align}
      \label{math:C}
      \mathcal{C} &  = \prod_{i=1}^{mn} [\underline{C}_i, \overline{C}_i]  = \{ K\in\mathbb{R}^{m\times n}\mid g(K)\geq 0\},
   \end{align} 
   where $\underline{C}_i, \overline{C}_i\in\mathbb{R}$, $g(K) = \begin{bmatrix} g_1(K) & \dots & g_{2nm}(K)\end{bmatrix}^\top$,
   \begin{align} 
      \label{math:ineq}
      g_i(K)&= \begin{cases}
        K_i - \underline{C}_i &  \mathrm{for }~~  i=1,\dots,nm,\\
        \overline{C}_i - K_i &  \mathrm{for }~~ i=nm+1,\dots,2nm,
      \end{cases}
   \end{align}
   and $K_i$ is the $i$-th element of $\operatorname{vec}(K)$.
   The dynamic state-feedback controller for the system \eqref{math:open_loop} is defined by
   \begin{subequations}
      \label{math:dyn}
      \begin{align}
         \operatorname{vec}(\dot K ) &= - \alpha \nabla_\mathcal{C} f_K(\rho(t)) , \quad \operatorname{vec}(K(0)) \in \operatorname{int}(\mathcal{C}) , \label{math:dyn_1}\\
         u & = -Kx ,
      \end{align}
   \end{subequations}
   where $\alpha>0$ and $\nabla_{\mathcal{C}}f_K(\rho(t))$ denotes the projected gradient of the LQR cost of $(A(\rho(t)),B)$ onto $\mathcal{C}$. Specifically,
   \begin{align}
      \nabla_{\mathcal{C}} f_K(\rho(t))& =  M(K) \operatorname{vec}(2 (RK-B^\top P_K) Y_K ), \label{math:proj_grad}
   \end{align} where $P_K$ and $Y_K$ solve the Lyapunov equations \eqref{math:Ly} and \eqref{math:Ly2}\footnote{Note that the matrices $P_K$ and $Y_K$ depend on $\rho(t)$, but this dependence is omitted for readability and explicitly indicated only when necessary.}, respectively, with $A$ replaced by $A(\rho(t))$, and the projection matrix $M(K)$ is as in \eqref{math:projected_grad} evaluated at $\operatorname{vec}(K)$.
\end{definition}


\begin{remark}
   \label{remark:LQR_opt}
      The projected gradient flow \eqref{math:dyn} seeks to solve the optimization problem
      \begin{align}
         \min_{K\in\mathcal{C}} f_K(\rho(t)) = \min_{K\in\mathcal{C}}\operatorname{tr }\left(P_K(\rho(t))\right), \label{math:LPV_prob}
      \end{align}
      which is a modified LQR problem where the feasible set $\mathcal{C}$ may not correspond to the set of all stabilizing feedback gains. Moreover, the optimization problem \eqref{math:LPV_prob} is time-varying due to $\rho(t)\in\mathcal{F}_{\mathcal{P}}^\infty$ and the time-varying minimizer may never be attained by the projected gradient flow~\eqref{math:dyn_1}.
\end{remark}

Two important aspects should be noted. Firstly, the projected gradient flow \eqref{math:dyn_1} is designed to ensure that the trajectory $\operatorname{vec}(K(t))$ remains within the hyperrectangle $\mathcal{C}$. However, the gradient $\nabla f_K$, and consequently the projected gradient  $\nabla_\mathcal{C} f_K$ (see Subsection \ref{subsec:proj}), may be undefined for some $\operatorname{vec}(K ) \in\mathcal{C}$. Furthermore, the projected gradient flow could diverge for certain parameter trajectories $\rho(t)\in\mathcal{F}_{\mathcal{P}}^\infty$ and specific choices of $\mathcal{C}$. Secondly, the optimal LQR feedback gains $ \operatorname{vec}(K^*_\rho ) $ may not lie within the hyperrectangle $\mathcal{C}$ for all $\rho\in\mathcal{P}$. Consequently, the trajectory $K(t)$ generated by \eqref{math:dyn_1} may not converge to the optimal feedback gains $K^*_\rho$ for constant parameter trajectories $\rho(t)=\rho\in\mathcal{P}$. 
These aspects are analyzed and resolved in the following section.
Before proceeding, we first define the closed-loop system by combining the open-loop system~\eqref{math:open_loop} and the controller \eqref{math:dyn}.

\begin{definition}[Closed-loop System]
   The closed-loop system is given by  
   \begin{subequations}
      \label{math:closed_loop}
      \begin{align}
         \dot x &= (A(\rho(t))- BK)x, \label{math:subsys_1} \\ 
         \operatorname{vec}(\dot K ) &= - \alpha \nabla_\mathcal{C} f_K(\rho(t)), \label{math:subsys_2} 
      \end{align}
   \end{subequations}
   where $A(\rho(t))$ is as in \eqref{math:open_loop} and \eqref{math:subsys_2} is defined in Definition~\ref{def:dyn}.
\end{definition}

\begin{remark}
   \label{remark:not_LPV}
   The system \eqref{math:closed_loop} is nonlinear due to the product $Kx$ of the states and the adaptation rule $- \alpha \nabla_\mathcal{C} f_K(\rho(t))$, which is a vector of rational functions in the entries of~$K$. Thus, \eqref{math:closed_loop} is not a generic LPV system as in \eqref{math:LPV} and cannot be analyzed using standard LPV methods.
\end{remark}

\section{Stability and Optimality Analysis}
\label{sec:stability}
This section addresses the quadratic stability of the LPV system in Subsection~\ref{subsec:quad}, the boundedness and the optimality of the controller trajectories in Subsection~\ref{subsec:bound}, and the asymptotic stability of the frozen-time closed-loop systems in Subsection~\ref{subsection:frozen}. To facilitate readability, Fig.~\ref{fig:4} illustrates the logical dependencies among the contributions, assumptions, propositions, and theorems in this section.

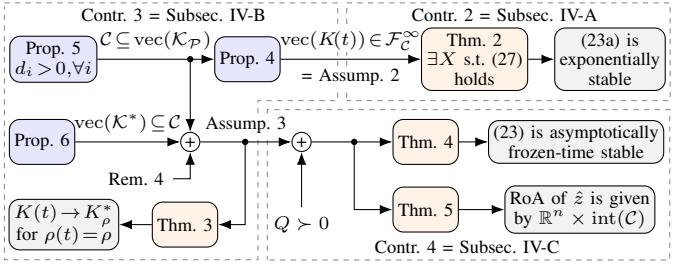
\begin{figure}
   \centering
\begin{tikzpicture}[
   node style/.style={rectangle, draw=black, align=center, minimum height=1.7em, minimum width=2em,rounded corners},
   prop style/.style={rectangle, draw=black, align=center, minimum height=1.7em, minimum width=2em,rounded corners,fill=blue!10,inner sep=1.9pt},
   theo style/.style={rectangle, draw=black, align=center, minimum height=1.7em, minimum width=2em,rounded corners,fill=orange!10,inner sep=1.9pt},
   results style/.style={rectangle, draw=black, align=center, minimum height=1em, minimum width=2em,rounded corners,fill=gray!10,inner sep=1.9pt},
   text style/.style={draw=none, fill=none,align=center},
   plus style/.style={circle, draw=black, inner sep=0.7pt, minimum size=0.6em, font=\scriptsize},
   arrow style/.style={-{Latex}},font=\fontsize{7pt}{7.5pt}\selectfont
 ]

\node[draw=gray!90](contr1) at (6.05,0.1) [rectangle,dash pattern=on 2pt off 2pt,dash phase=0.35pt,minimum width=4.3cm,minimum height=1.3cm] {};
\node[draw=gray!90](contr2) at (5.52,-1.67) [rectangle,dash pattern=on 2pt off 2pt,dash phase=3.2pt,minimum width=5.35cm,minimum height=2.0cm] {};
\draw[draw=gray!90,dash pattern=on 2pt off 2pt,dash phase=0pt] (-0.65,0.75) -- (3.78,0.75) -- (3.78,-0.55) -- (2.72,-0.55) -- (2.72,-2.667)--(-0.65,-2.667) -- (-0.65,0.7);
\node[text style] (T7) at (6,0.59) {Contr. \ref{item:second} = Subsec. \ref{subsec:quad}};
\node[text style] (T8) at (1.6,0.59) {Contr. \ref{item:third} = Subsec. \ref{subsec:bound}};
\node[text style] (T9) at (5.5,-2.49) {Contr. \ref{item:fourth} = Subsec. \ref{subsection:frozen}};

 \node[prop style] (A) at (0,0) {Prop. \ref{prop:2} \\
 $d_i \!> \!0,\! \forall i$ 
 };
 \node[prop style] (B) at (2.58,0) {Prop. \ref{prop:bound}};
 \node[theo style] (C) at (5.6,0) {Thm. \ref{theo:1} \\
 $\exists X~\text{s.t.}~$\eqref{math:quad_2} \\
 holds
 };
 \node[results style] (T1) at (7.4,0) {
\eqref{math:subsys_1} is \\ exponentially \\ stable
 };
 \node[prop style] (D) at (-0.15,-1.1) {Prop. \ref{prop:1}};
 \node[plus style] (P1) at (1.82,-1.1){+};
 \node[plus style] (P2) at (3.3,-1.1){+};
 \node[theo style] (E) at (4.95,-1.1) {Thm. \ref{theo:froze}};
 \node[theo style] (F) at (4.95,-2.0) {Thm. \ref{theo:region}};
 \node[results style] (T2) at (6.95,-1.1) {
   \eqref{math:closed_loop} is asymptotically \\ frozen-time stable
       };
   \node[text style] (T3) at (3.3,-2.2) {$Q\succ 0$};
   \node[results style] (T4) at (7,-2) {
       RoA of $\hat{z}$ is given \\ by $\mathbb{R}^n\times \operatorname{int}( \mathcal{C})$
           };

   \node[text style] (T5) at (1.1,-1.59) {Rem. \ref{rem:strict}};
   \node[results style] (G) at (0.14,-2.2) {
      $K(t) \! \to \! K_\rho^*$ \\ for $\rho(t)\!=\!\rho$
           };
   \node[theo style] (H) at (1.75,-2.2) {Thm. \ref{theo:3}};

 \draw[arrow style] (A) -- node[midway, above] {$\mathcal{C} \!\subseteq  \!\operatorname{vec}( \mathcal{K}_\mathcal{P} )$} (B);
  \fill[white] (3.4,0.1) rectangle +(1cm,0.32cm);
   \fill[white] (3.4,-0.38) rectangle +(1cm,0.285cm);

 \draw[arrow style] (B) -- node[midway, above] {$ \operatorname{vec}(K\!(t) )\!\in \!\mathcal{F}_\mathcal{C}^\infty$} node[midway, below] {= Assump. \ref{ass:bound}}(C);
 \draw[arrow style] (C) -- (T1);
 \draw[arrow style] (D) -- node[midway, above] {$\operatorname{vec}( \mathcal{K^*} )\!\subseteq \!\mathcal{C} $} (P1);
 \draw[arrow style] (1.82,0) -- (P1);
    \fill[white] (2,-1.05) rectangle +(1cm,0.285cm);
 \draw[arrow style] (P1) --  node[midway, above] {Assump. \ref{ass:opt}} (P2);
 \draw[arrow style] (P2) -- (E);
 \draw[arrow style] (4,-1.1) |- (F.west);
 \draw[arrow style] (E) -- (T2);
 \draw[arrow style] (F) -- (T4);
 \draw[arrow style] (T3) -- (P2);
 \draw[arrow style] (T5) -| (P1);
 \draw[arrow style] (2.55,-1.1) |- (H);
 \draw[arrow style] (H) -- (G);
 \fill (1.82, 0) circle[radius=1.2pt];
 \fill (2.55, -1.1) circle[radius=1.2pt];
 \fill (4,-1.1) circle[radius=1.2pt];
\end{tikzpicture}
   \caption{Structured overview of the logical dependencies among the contributions, assumptions, propositions, and theorems of Section \ref{sec:stability}. The subsections are separated by Assumptions~\ref{ass:bound} and~\ref{ass:opt}.}
   \label{fig:4}
\end{figure}

\subsection{Quadratic Stability}
\label{subsec:quad}
In this subsection, we derive sufficient conditions for the quadratic stability of subsystem~\eqref{math:subsys_1}.

\begin{assumption}
   \label{ass:bound}
   The trajectory $\operatorname{vec}(K(t))$ of the dynamic controller \eqref{math:dyn_1} stays in $\mathcal{C}$
   for $\rho(t)\in\mathcal{F}_{\mathcal{P}}^\infty$. 
\end{assumption}
Sufficient conditions ensuring that Assumption~\ref{ass:bound} holds are presented in the subsequent subsection.
\begin{proposition}
   \label{prop:quad_stab}
   Let Assumption \ref{ass:bound} hold. If there exists a matrix $X\succ 0$ such that 
   \begin{align}
      \left(A(\rho) - BK\right)^\top X +X  \left(A(\rho) - BK\right) \prec 0 \label{math:quad_1}
   \end{align}
   holds for all $\rho\in\mathcal{P}$ and all $ \operatorname{vec}(K )\in\mathcal{C}$, then the subsystem~\eqref{math:subsys_1} is quadratically stable for all $\rho(t)\in\mathcal{F}_{\mathcal{P}}^\infty$.
\end{proposition}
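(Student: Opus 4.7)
The plan is to apply Lyapunov's direct method with the common quadratic function $V(x) = x^\top X x$ supplied by the hypothesis. First, I would differentiate $V$ along trajectories of the subsystem \eqref{math:subsys_1}, obtaining
$\dot V(x) = x^\top\!\left[(A(\rho(t))-BK(t))^\top X + X(A(\rho(t))-BK(t))\right]x$. Since $\rho(t)\in\mathcal{P}$ for all $t\geq 0$ by definition of $\mathcal{F}_{\mathcal{P}}^\infty$ and, by Assumption~\ref{ass:bound}, $\operatorname{vec}(K(t))\in\mathcal{C}$ for all $t\geq 0$, the hypothesized inequality \eqref{math:quad_1} can be invoked pointwise in time. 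This immediately gives $\dot V(x) < 0$ whenever $x\neq 0$, which is enough for asymptotic but not yet for exponential decay.

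To upgrade to exponential (quadratic) stability, I would exploit compactness. The parameter set $\mathcal{P}$ is a compact polytope by assumption, and the controller set $\mathcal{C}$ is compact as a finite Cartesian product of closed intervals. Defining $G(\rho,K) := (A(\rho)-BK)^\top X + X(A(\rho)-BK)$, continuous in $(\rho,K)$, the map $(\rho,K)\mapsto \lambda_{\max}(G(\rho,K))$ is continuous on the compact product $\mathcal{P}\times\mathcal{C}$ and strictly negative everywhere by \eqref{math:quad_1}. Hence it attains a maximum $-\gamma < 0$. Consequently
\[
\dot V(x) \;\leq\; -\gamma\,\|x\|^2 \;\leq\; -\tfrac{\gamma}{\lambda_{\max}(X)}\,V(x),
\]
and standard comparison together with $\lambda_{\min}(X)\|x\|^2\leq V(x)\leq \lambda_{\max}(X)\|x\|^2$ yields exponential convergence of $\|x(t)\|$ to zero, uniformly in $\rho(t)\in\mathcal{F}_{\mathcal{P}}^\infty$. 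Since the Lyapunov matrix $X$ is independent of both $\rho$ and $K$, this is precisely quadratic stability of \eqref{math:subsys_1} in the sense of Proposition~\ref{prop:quad}.

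The main obstacle is conceptual rather than technical: the subsystem \eqref{math:subsys_1} is not a standard LPV system because $K(t)$ is produced by the nonlinear projected gradient flow rather than an affine parameter dependence (cf. Remark~\ref{remark:not_LPV}), so Theorem~\ref{theo:poly_quad} and Proposition~\ref{prop:quad} do not apply directly. The resolution is already built into the statement: \eqref{math:quad_1} is required to hold \emph{uniformly} over all $K$ with $\operatorname{vec}(K)\in\mathcal{C}$, so the $K$-dependence is treated identically to an additional ``parameter''. Assumption~\ref{ass:bound} is essential here, as it guarantees that $K(t)$ never leaves $\mathcal{C}$; without it, the uniform bound derived above would not apply along the actual closed-loop trajectory.
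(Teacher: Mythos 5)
Your proof is correct and follows essentially the same route as the paper: both arguments treat $K(t)$ as an additional varying parameter ranging over the compact set $\mathcal{C}$ (justified by Assumption~\ref{ass:bound}) so that the common Lyapunov matrix $X$ certifies quadratic stability over the enlarged parameter set $\mathcal{P}\times\mathcal{C}$. The only difference is that the paper concludes by citing Proposition~\ref{prop:quad}, whereas you unpack that cited result explicitly via the compactness/uniform-decay argument; this is a valid, self-contained elaboration rather than a different approach.
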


\begin{proof}
   Since the trajectories of the dynamic controller remain in $\mathcal{C}$, $ \operatorname{vec}(K(t) )\in \mathcal{F}_{\mathcal{C}}^\infty$ holds and the feedback gain $K$ can be interpreted as a varying parameter. Let $\tilde{\rho }^\top = \begin{bmatrix} \rho^\top & \operatorname{vec}(K)^\top\end{bmatrix}$, ${A}_K(\tilde{\rho})= A(\rho)-B K$ and $\tilde{\mathcal{P}}=\mathcal{P}\times \mathcal{C}$. The subsystem \eqref{math:subsys_1} can be represented as a generic LPV system \eqref{math:LPV} with ${A}_K(\tilde{\rho})$ and $\tilde{\rho}\in\mathcal{F}_{\tilde{\mathcal{P}}}^\infty$. 
   Hence, the LMI conditions of Proposition \ref{prop:quad} apply, and \eqref{math:subsys_1} is quadratically stable for all $\rho(t)\in\mathcal{F}_{\mathcal{P}}^\infty$ if the LMI \eqref{math:quad_1} holds for all $\rho\in\mathcal{P}$ and $ \operatorname{vec}(K )\in\mathcal{C}$.
\end{proof}

Note that \eqref{math:quad_1} must hold for infinitely many values of $\rho\in\mathcal{P}$. To make \eqref{math:quad_1} verifiable, we reformulate it into a finite set of LMIs, requiring the following lemma.

\begin{lemma}
   \label{lemma:poly}
   Let Assumption \ref{ass:bound} hold. Then, the subsystem \eqref{math:subsys_1} is a polytopic LPV system and can be represented as a time-varying convex combination of $N$ LTI systems
   \begin{align}
      \dot x = \left(\sum_{i=1}^N \lambda_i(t) \tilde{A}_{K,i} \right)x, \quad \lambda(t) \in \Lambda_{N}.
   \end{align}
\end{lemma}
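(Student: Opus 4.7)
The plan is to augment the parameter vector by the entries of $K$, show that this places the closed-loop dynamics in the polytopic LPV framework of \eqref{math:affine}--\eqref{math:LTI_comb}, and then invoke the standard vertex decomposition.

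First, under Assumption~\ref{ass:bound}, $\operatorname{vec}(K(t))$ takes values in the hyperrectangle $\mathcal{C}$, so the feedback gain can legitimately be treated as a parameter. I would define the augmented parameter $\tilde{\rho}(t)^\top = \begin{bmatrix}\rho(t)^\top & \operatorname{vec}(K(t))^\top\end{bmatrix}$ taking values in $\tilde{\mathcal{P}} = \mathcal{P} \times \mathcal{C}$. Since $\mathcal{P}$ is a convex compact polytope by the problem formulation \eqref{math:open_loop} and $\mathcal{C}$ is a hyperrectangle (hence a convex compact polytope) by \eqref{math:C}, their Cartesian product $\tilde{\mathcal{P}}$ is again a convex compact polytope, with vertex set $\operatorname{vert}(\tilde{\mathcal{P}}) = \operatorname{vert}(\mathcal{P}) \times \operatorname{vert}(\mathcal{C})$ of cardinality $N = |\operatorname{vert}(\mathcal{P})|\cdot 2^{mn}$.

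Next, I would verify that the closed-loop system matrix $\tilde{A}_K(\tilde{\rho}) := A(\rho) - BK$ is affine in $\tilde{\rho}$. The term $A(\rho)$ is affine in $\rho$ by \eqref{math:affine}, and the term $BK$ is linear in $\operatorname{vec}(K)$ because each entry of $BK$ is a linear combination of the entries of $K$ with coefficients from $B$ (equivalently, $\operatorname{vec}(BK) = (I_n \otimes B)\operatorname{vec}(K)$). Summing an affine and a linear function of disjoint blocks of coordinates yields an affine function of $\tilde{\rho}$. Therefore subsystem~\eqref{math:subsys_1} fits the polytopic LPV template described just before \eqref{math:LTI_comb}, now with the augmented parameter $\tilde{\rho}(t) \in \mathcal{F}_{\tilde{\mathcal{P}}}^\infty$.

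Finally, I would apply the standard representation \eqref{math:LTI_comb}: since $\tilde{A}_K$ is affine on the convex compact polytope $\tilde{\mathcal{P}}$, any $\tilde{\rho}(t)$ admits a representation $\tilde{\rho}(t) = \sum_{i=1}^{N} \lambda_i(t)\,\tilde v_i$ with $\lambda(t) \in \Lambda_{N}$ and $\tilde v_i \in \operatorname{vert}(\tilde{\mathcal{P}})$, and affineness transports this into $\tilde{A}_K(\tilde{\rho}(t)) = \sum_{i=1}^{N} \lambda_i(t)\, \tilde{A}_{K,i}$ where $\tilde{A}_{K,i} := \tilde{A}_K(\tilde v_i)$. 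Substituting into $\dot x = \tilde{A}_K(\tilde{\rho}(t))x$ produces the claimed convex combination.

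The argument is essentially bookkeeping, and the only mildly delicate point is pinning down that the product $\mathcal{P}\times\mathcal{C}$ remains a polytope with vertices given by the Cartesian product of the vertex sets and that the map $K\mapsto BK$ is linear in $\operatorname{vec}(K)$; once those are in place, Theorem~\ref{theo:poly_quad} and the representation \eqref{math:LTI_comb} apply verbatim, which is what enables converting the infinite family of LMIs in Proposition~\ref{prop:quad_stab} to a finite one in the sequel.
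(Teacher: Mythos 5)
Your proof is correct and follows essentially the same route as the paper: augment the parameter vector with $\operatorname{vec}(K)$, observe that $\tilde{\mathcal{P}}=\mathcal{P}\times\mathcal{C}$ is a convex compact polytope with $N=|\operatorname{vert}(\mathcal{P})|\cdot 2^{mn}$ vertices, note that $A(\rho)-BK$ is affine in the augmented parameter, and invoke the vertex decomposition \eqref{math:LTI_comb}. The paper additionally writes out the explicit formula for the vertex matrices $\tilde{A}_{K,i}$ via the inverse vectorization operator, but this is only a concretization of the same argument.
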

\begin{proof}
As in the proof of Proposition \ref{prop:quad_stab}, we interpret the controller as a varying parameter. Since both $\mathcal{P}$ and $\mathcal{C}$ are polytopes and $A_K(\tilde{\rho}) = A(\rho) - BK$ is affine in $\rho$ and $K$, the subsystem \eqref{math:subsys_1} is a polytopic LPV system, representable as a time-varying convex combination of LTI systems (see \eqref{math:LTI_comb}).
Let $\tilde{\mathcal{P}}=\mathcal{P} \times \mathcal{C}$ and $ V= \vert \operatorname{vert }(\mathcal{P}) \vert$. 
Then, $\tilde{\mathcal{P}}$ has $N= V \cdot 2^{mn}$ vertices that are denoted by $v_i = \begin{bmatrix} {v_i^{\mathcal{P}}}^\top & {v_i^{\mathcal{C}}}^\top \end{bmatrix}^\top, i=1,\dots,N$. 
Using the affine structure of $A(\rho(t))$ (see \eqref{math:affine}), 
the corresponding LTI matrices are given by 
   \begin{align}
      \tilde{A}_{K,i}= A_0 + \sum_{j=1}^{p} \left(v_i^{\mathcal{P}}\right)_j A_j - B  \operatorname{vec}^{-1}_{m\times n}\left(v_i^{\mathcal{C}}\right),
   \end{align}
   where $\left( \cdot \right)_j$ denotes the $j$-th element of $v_i^{\mathcal{P}}$, $ \operatorname{vec}^{-1}_{m\times n}$ denotes the inverse vectorization operator\footnote{The inverse operator $\operatorname{vec}^{-1}_{m\times n}:\mathbb{R}^{mn}\to \mathbb{R}^{m\times n}$ can be defined as $v \mapsto \operatorname{vec}_{m\times n}^{-1}(v )=\left( \operatorname{vec}^\top(I_n ) \otimes I_m\right)(I_n \otimes v)$.} and $i=1,\dots,N$.
\end{proof}


Lemma \ref{lemma:poly} enables the reformulation of the condition \eqref{math:quad_1} as a finite set of LMIs, as stated in the following theorem.

\begin{theorem}
   \label{theo:1}
   Let Assumption \ref{ass:bound} hold.
   The subsystem \eqref{math:subsys_1} is quadratically stable if and only if there exists a matrix $X\succ 0$ such that
   \begin{align}
      \tilde{A}_{K,i}^\top X + X \tilde{A}_{K,i} \prec 0 \label{math:quad_2}
   \end{align}
   holds for all $i=1,\dots,N$.
\end{theorem}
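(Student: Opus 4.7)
The plan is to reduce Theorem \ref{theo:1} directly to the polytopic characterization already provided by Theorem \ref{theo:poly_quad}, using Lemma \ref{lemma:poly} as the bridge. Under Assumption \ref{ass:bound}, the feedback $K$ can be treated as an additional time-varying parameter, so the relevant augmented parameter set is the polytope $\tilde{\mathcal{P}} = \mathcal{P} \times \mathcal{C}$ with $N = |V|\cdot 2^{mn}$ vertices, and the corresponding vertex matrices $\tilde{A}_{K,i}$ are exactly those constructed in Lemma \ref{lemma:poly}.

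For the sufficiency direction, I would pick an arbitrary $(\rho, K) \in \mathcal{P} \times \mathcal{C}$ and invoke the convex-combination representation $A(\rho) - BK = \sum_{i=1}^N \lambda_i \tilde{A}_{K,i}$ with $\lambda \in \Lambda_N$ obtained from Lemma \ref{lemma:poly}. Multiplying each vertex LMI \eqref{math:quad_2} by the nonnegative weight $\lambda_i$ and summing, the linearity of the map $M \mapsto M^\top X + XM$ yields $(A(\rho)-BK)^\top X + X(A(\rho)-BK) \prec 0$ pointwise on the entire augmented set. This is precisely the hypothesis of Proposition \ref{prop:quad_stab}, which then delivers quadratic stability of \eqref{math:subsys_1} for every $\rho(t) \in \mathcal{F}_{\mathcal{P}}^\infty$.

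For the necessity direction, I would apply the converse half of Theorem \ref{theo:poly_quad} to the augmented polytopic LPV representation provided by Lemma \ref{lemma:poly}: if \eqref{math:subsys_1} is quadratically stable in the augmented parameter $\tilde\rho$, then there must exist a common $X \succ 0$ satisfying the vertex Lyapunov inequalities, which is exactly \eqref{math:quad_2} for $i = 1, \dots, N$.

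The main technical content was in fact absorbed into Lemma \ref{lemma:poly} — namely, establishing that \eqref{math:subsys_1} genuinely admits a polytopic LPV representation in the augmented parameter $\tilde\rho$, with the correct enumeration of the $2^{mn}$ hyperrectangle vertices. Once that representation is in hand, Theorem \ref{theo:1} is essentially an immediate corollary of Theorem \ref{theo:poly_quad}, and I do not foresee any serious obstacle in writing the proof; the only care needed is to make the convex-combination step explicit so that the sufficiency argument is self-contained rather than implicit in a black-box citation.
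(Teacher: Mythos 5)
Your proposal is correct and follows essentially the same route as the paper: invoke Lemma \ref{lemma:poly} to recast \eqref{math:subsys_1} as a polytopic LPV system in the augmented parameter $\tilde\rho = (\rho, \operatorname{vec}(K))$ and then apply the vertex characterization of Theorem \ref{theo:poly_quad}. The only difference is that you unpack the sufficiency direction via an explicit convex-combination argument and Proposition \ref{prop:quad_stab}, whereas the paper cites Theorem \ref{theo:poly_quad} for both directions at once; this is a presentational choice, not a substantive divergence.
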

\begin{proof}
   Under the given assumption, the subsystem \eqref{math:subsys_1} can be interpreted as a polytopic LPV system (see Lemma \ref{lemma:poly}). Hence, \eqref{math:quad_2} is equivalent to the condition of Theorem \ref{theo:poly_quad}.
\end{proof}

Thus far, we have shown that the subsystem~\eqref{math:subsys_1} is 
quadratically stable, and consequently, exponentially stable (see Proposition~\ref{prop:quad}) for arbitrarily fast parameter variations, provided that the Assumption \ref{ass:bound} and the condition~\eqref{math:quad_2} hold. 

\subsection{Boundedness and Optimality of the Dynamic Controller}
\label{subsec:bound}
This subsection addresses the construction of a hyperrectangle $\mathcal{C}$ that satisfies Assumption~\ref{ass:bound} and contains the optimal LQR gains $K^*_\rho$ for all $\rho\in\mathcal{P}$. We further establish conditions under which the trajectory $K(t)$ of the dynamic controller \eqref{math:dyn_1} converges to the optimal feedback gain $K_\rho^*$ for constant parameter trajectories $\rho(t)=\rho\in\mathcal{P}$.

The set of Hurwitz stable feedback gains of the system $\left(A(\rho),B\right)$ with a given parameter $\rho\in\mathcal{P}$ is denoted as 
\begin{align}
   \mathcal{K}_{\rho} = \{ K\in\mathbb{R}^{m\times n }: \operatorname{Re}\left(\lambda_i(A(\rho)-BK)\right) < 0 ~ \forall i \}
\end{align}
and is referred as the stability region of the system $(A(\rho),B)$.
The intersection of the sets $\mathcal{K}_\rho$ for all $\rho\in\mathcal{\mathcal{P}}$ is defined as $\mathcal{K}_{\mathcal{P}} = \bigcap_{\rho\in\mathcal{P}} \mathcal{K}_\rho$.\footnote{If the parameter variation is sufficiently large, $\mathcal{K}_{\mathcal{P}}$ may be an empty set.}
\vspace{0.15cm}
\begin{proposition}
   \label{prop:bound}
   If $\mathcal{C}\subseteq \operatorname{vec}\left(\mathcal{K}_{\mathcal{P}}\right)$ holds, then the trajectory $\operatorname{vec}(K(t) )$ of the dynamic controller \eqref{math:dyn_1} remains in $\operatorname{int}( \mathcal{C})$ for $\rho(t)\in\mathcal{F}_{\mathcal{P}}^\infty$ and is continuous, i.e., $ \operatorname{vec}(K(t) ) \in\mathcal{F}_{\mathcal{C}}^\mathcal{V}$.
\end{proposition}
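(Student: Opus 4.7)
The plan is to combine three ingredients: well-posedness of the projected gradient on $\mathcal{C}\times\mathcal{P}$, an algebraic identity for the projection matrix $M(K)$ that reduces the invariance of $\operatorname{int}(\mathcal{C})$ to a scalar linear ODE for each constraint function $g_i$, and a compactness argument to obtain a bounding polytope for $\operatorname{vec}(\dot K)$.

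First I would show that $\nabla_{\mathcal{C}} f_K(\rho)$ is well-defined and smooth on $\mathcal{C}\times\mathcal{P}$. The hypothesis $\mathcal{C}\subseteq\operatorname{vec}(\mathcal{K}_{\mathcal{P}})$ guarantees that $A(\rho)-BK$ is Hurwitz for every $\operatorname{vec}(K)\in\mathcal{C}$ and every $\rho\in\mathcal{P}$, so the Lyapunov equations \eqref{math:Ly} and \eqref{math:Ly2} admit unique positive-definite solutions $P_K(\rho)$ and $Y_K(\rho)$ that depend smoothly on $(K,\rho)$. Local Lipschitz continuity in $K$ together with piecewise continuity of $\rho(t)$ then yields a unique Carathéodory solution of \eqref{math:dyn_1} on some maximal interval $[0,T^*)$ contained in $\operatorname{int}(\mathcal{C})$.

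The heart of the proof is showing that this trajectory cannot exit $\operatorname{int}(\mathcal{C})$. Substituting $J_gJ_g^\top = F - 2\operatorname{diag}(g)$ from \eqref{math:f_x} into the expression $J_gM = J_g - J_gJ_g^\top F^{-1}J_g$ yields the identity
\begin{equation*}
J_g(K)\,M(K) \;=\; 2\operatorname{diag}(g(K))\,F(K)^{-1}J_g(K),
\end{equation*}
which forces each constraint function to obey a scalar linear ODE of the form $\dot g_i(K(t)) = g_i(K(t))\,h_i(t)$, where $h_i(t)$ is bounded as long as $(K(t),\rho(t))$ stays in $\mathcal{C}\times\mathcal{P}$. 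Integration gives $g_i(K(t)) = g_i(K(0))\exp(\int_0^t h_i(s)\,\mathrm d s)>0$ for every $i$ and every $t\in[0,T^*)$, so $\operatorname{int}(\mathcal{C})$ is forward invariant. Because this argument does not depend on the sign or regularity of $\rho(t)$ beyond the boundedness needed to bound $h_i$, it generalises Proposition~\ref{prop:invariant} to arbitrarily fast $\rho(t)\in\mathcal{F}_{\mathcal{P}}^\infty$. Since the trajectory remains inside the compact set $\mathcal{C}$, a standard prolongation argument gives $T^* = \infty$. This step is what I expect to be the main obstacle, since it is the point at which the time dependence of $\rho$ must be decoupled from the boundary-preserving mechanism of the projection.

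Finally, continuity of $\operatorname{vec}(K(t))$ follows from absolute continuity of Carathéodory solutions. The right-hand side of \eqref{math:dyn_1} is continuous on the compact set $\mathcal{C}\times\mathcal{P}$, hence its image is bounded; enclosing this image together with the origin in an axis-aligned hyperrectangle provides a convex compact polytope $\mathcal{V}\ni 0$ with $\operatorname{vec}(\dot K(t))\in\mathcal{V}$ almost everywhere, which yields $\operatorname{vec}(K(t))\in\mathcal{F}_{\mathcal{C}}^\mathcal{V}$.
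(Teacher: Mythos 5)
Your proposal is correct, and it reaches the conclusion by a genuinely more self-contained route than the paper at the decisive step. The paper's proof has the same skeleton for well-posedness (smoothness of $f_K$ over $\mathcal{K}_\rho$, hence of the gradient over $\mathcal{C}\subseteq\operatorname{vec}(\mathcal{K}_{\mathcal{P}})$) and for continuity (the integral representation of $\operatorname{vec}(K(t))$ with finitely many discontinuities of $\rho$), but for the invariance of $\operatorname{int}(\mathcal{C})$ it simply invokes Proposition~\ref{prop:invariant}, which is stated for the \emph{autonomous} projected gradient flow of a fixed smooth $f$; its applicability to the non-autonomous flow driven by a discontinuous $\rho(t)\in\mathcal{F}_{\mathcal{P}}^\infty$ is asserted rather than argued. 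Your identity $J_g(K)M(K)=2\operatorname{diag}(g(K))F(K)^{-1}J_g(K)$ (which checks out: substitute $J_gJ_g^\top=F-2\operatorname{diag}(g)$) turns each constraint into a scalar linear ODE $\dot g_i=g_ih_i(t)$ with $h_i$ bounded and measurable on the interval where $(K,\rho)\in\mathcal{C}\times\mathcal{P}$, so positivity of $g_i$ is preserved regardless of how $\rho$ varies; combined with the prolongation argument this closes exactly the gap the paper glosses over. You also do more than the paper on the final claim: the paper only argues continuity of $\operatorname{vec}(K(t))$, whereas membership in $\mathcal{F}_{\mathcal{C}}^{\mathcal{V}}$ as defined in Section~\ref{sec:pre} additionally requires $\operatorname{vec}(\dot K(t))$ to lie in a compact convex polytope $\mathcal{V}\ni 0$, which your compactness-of-the-image construction supplies. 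The cost of your route is extra algebra and bookkeeping (Carath\'eodory solutions, maximal intervals); the benefit is that the invariance and the rate bound are actually proved for the time-varying, discontinuous-parameter setting in which the proposition is stated.
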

\begin{proof}
Since $\mathcal{C}$ is convex, the tangent cone $C_x \mathcal{C}$ and the projection onto it are well-defined, unique and independent of $\rho(t)$.
   We first verify that the gradient $\nabla f_K(\rho)$ is well-defined over $\mathcal{C}$ for all $\rho \in \mathcal{P}$. 
   By \cite[Prop. 3.2]{bu2020clqr}, the cost function $f_K(\rho)$ is $C^\infty$ smooth over $\mathcal{K}_\rho$, ensuring the existence of the gradient $\nabla f_K(\rho)$ for all $\operatorname{vec}(K )\in \mathcal{C}\subseteq \operatorname{vec}(\mathcal{K}_{\mathcal{P}})\subseteq \operatorname{vec }(\mathcal{K}_\rho)  $ and $\rho \in \mathcal{P}$. Since $ \operatorname{vec}(K(0))\in \operatorname{int}(\mathcal{C})$, the projection gradient\eqref{math:dyn_1} ensures that $\operatorname{vec}(K(t) )$ remains in $\operatorname{int}(\mathcal{C})$ due to the invariance of $\operatorname{int }(\mathcal{C})$ (see Proposition \ref{prop:invariant}). 
   Although $\nabla f_K(\rho(t))$ may exhibit discontinuities due to $\rho(t) \in \mathcal{F}_{\mathcal{P}}^\infty$, the trajectory \begin{align}
      \operatorname{vec}(K(t) ) = \operatorname{vec}(K(0)) - \alpha \int_0^t \nabla_\mathcal{C} f_K(\rho(\tau)) \mathrm{d}\tau
   \end{align} remains continuous,  
   as the possible discontinuities of $\rho(t)$ are finite, given the compactness of $\mathcal{P}$.
\end{proof}

Since the subsystem \eqref{math:subsys_1} is linear in the state $x$, the Routh-Hurwitz stability criterion (see Appendix \ref{subsec:routh}) can be applied to determine the stability region $\mathcal{K}_\rho$ for a given $\rho\in\mathcal{P}$.

\begin{lemma}
   \label{lemma:stable_poly}
   The stability region $\mathcal{K}_\rho$ is characterized by $n$ polynomial inequalities $f_i(K,\rho) > 0, i=1,\dots,n$, where each $f_i(K,\rho)$ is a polynomial in the entries of $K$ and $\rho$. Formally,
   \begin{align}
      \label{math:stab_pol}
      \mathcal{K}_\rho = \{ K \in \mathbb{R}^{m \times n} : f_i(K,\rho) > 0, \quad i=1,\dots,n \}.
   \end{align}
 \end{lemma}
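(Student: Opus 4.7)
The plan is to combine two standard observations: first, the characteristic polynomial of the closed-loop matrix $A(\rho)-BK$ has coefficients that are polynomials in the entries of $K$ and $\rho$; second, the Routh--Hurwitz criterion expresses Hurwitz stability of a degree-$n$ polynomial as positivity of $n$ leading principal minors of its Hurwitz matrix, each of which is polynomial in the coefficients of that polynomial. Composing the two yields $n$ polynomial inequalities in the entries of $K$ and $\rho$.

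I would carry this out in three short steps. First, form the characteristic polynomial
\begin{align*}
   p_{K,\rho}(s) = \det\bigl(sI_n - (A(\rho)-BK)\bigr) = s^n + a_{n-1}(K,\rho)\,s^{n-1} + \cdots + a_0(K,\rho),
\end{align*}
and observe, via the Leibniz formula for the determinant, that each coefficient $a_j(K,\rho)$ is a polynomial in the entries of $A(\rho)-BK$. Since $A(\rho)$ is affine in $\rho$ by \eqref{math:affine} and $BK$ is affine in $K$, each $a_j(K,\rho)$ is therefore a polynomial in the entries of $K$ and $\rho$.

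Second, invoke the Routh--Hurwitz criterion as stated in Appendix \ref{subsec:routh}: the matrix $A(\rho)-BK$ is Hurwitz if and only if the $n$ leading principal minors $\Delta_1,\dots,\Delta_n$ of the Hurwitz matrix associated with $p_{K,\rho}(s)$ are strictly positive. Each $\Delta_i$ is, by definition, a polynomial expression in the coefficients $a_0(K,\rho),\dots,a_{n-1}(K,\rho)$. Defining
\begin{align*}
   f_i(K,\rho) := \Delta_i\bigl(a_0(K,\rho),\dots,a_{n-1}(K,\rho)\bigr), \qquad i=1,\dots,n,
\end{align*}
we obtain $n$ polynomials in the entries of $K$ and $\rho$, because a polynomial of polynomials is a polynomial.

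Third, conclude: $K\in\mathcal{K}_\rho$ iff $A(\rho)-BK$ is Hurwitz iff $f_i(K,\rho)>0$ for all $i=1,\dots,n$, yielding exactly the representation \eqref{math:stab_pol}. No step presents a real obstacle; the only mildly delicate point is bookkeeping of the degree and counting of the Routh--Hurwitz conditions to match the number $n$ claimed in the lemma, which follows from the standard convention that the Hurwitz matrix of a degree-$n$ monic polynomial is $n\times n$ and stability is equivalent to positivity of all $n$ leading principal minors.
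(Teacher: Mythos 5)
Your proof is correct and follows the same basic strategy as the paper: reduce membership in $\mathcal{K}_\rho$ to the Routh--Hurwitz criterion applied to the characteristic polynomial of $A(\rho)-BK$, whose coefficients are polynomial in the entries of $K$ and $\rho$ because $A(\rho)$ is affine in $\rho$ and $BK$ is linear in $K$. The one place where you diverge is the device used to obtain \emph{polynomial} (rather than rational) inequalities. The paper works with the Routh table from Appendix \ref{subsec:routh}, whose first-column entries $R_{i,1}$ are rational functions due to the division by $R_{i-1,1}$; it then clears these denominators by passing to a modified recursion $\tilde{R}_{i,j} = \tilde R_{i-1,1}\tilde R_{i-2,j+1} - \tilde R_{i-2,1}\tilde R_{i-1,j+1}$, justified by the observation that scaling a row by a positive quantity does not change the sign pattern. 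You instead use the $n$ leading principal minors $\Delta_1,\dots,\Delta_n$ of the Hurwitz matrix, which are polynomials in the coefficients by construction, so no denominator-clearing step is needed. Both routes yield exactly $n$ polynomial conditions and both are standard; yours is arguably cleaner since it sidesteps the (mildly delicate) argument that the sign-equivalence of the original and modified Routh columns holds on the set where all entries are required positive. The only inaccuracy is your attribution: Appendix \ref{subsec:routh} states the criterion in Routh-table form, not in terms of Hurwitz minors, so you are invoking a different (but equivalent and equally classical) formulation than the one the paper provides.
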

\begin{proof}
   The subsystem \eqref{math:subsys_1} is asymptotically stable if and only if the roots of the characteristic polynomial $p(s) = \operatorname{det }(sI-(A(\rho)-BK))$ have negative real parts. This condition holds if and only if all entries in the first column of the Routh table~\eqref{math:Routh} of $p(s)$ are positive, i.e., $c_1,R_{3,1},\dots,R_{n,1},c_n > 0$.
   The entries $R_{3,1},\dots,R_{n,1}$ are rational functions, as they involve divisions by $c_1,R_{3,1},\dots,R_{n-1,1}$ which have to be strictly positive such that the criterion holds.
   Since multiplying any row of the Routh table by a positive constant does not affect the stability criterion,   
   a modified Routh table can be obtained by using the recursion $\tilde{R}_{i,j} = \tilde R_{i-1,1} \tilde R_{i-2,j+1}- \tilde R_{i-2,1} \tilde R_{i-1,j+1}$ for $3\leq i \leq n$, where the division by $R_{i-1,1}$ is omitted. 
   This results in $n$ polynomials $f_1(K,\rho)=c_1,f_2(K,\rho)=\tilde R_{3,1},\dots, f_{n-1}(K,\rho)=\tilde R_{n,1}, f_n(K,\rho) = c_n$ that must each be strictly positive.
\end{proof}

Next, we address how the condition $\mathcal{C} \subseteq \operatorname{vec}\left(\mathcal{K}_{\mathcal{P}}\right)$ of Proposition~\ref{prop:bound} can be verified, as it ensures that Assumption \ref{ass:bound} holds, which is required for quadratic stability (see Theorem~\ref{theo:1}).

\begin{proposition}
   \label{prop:2}
   The condition $\mathcal{C}\subseteq \operatorname{vec}( \mathcal{K}_{\mathcal{P}})$ holds, if the $n$ optimization problems, for $i=1,\dots,n$,
   \begin{subequations}
      \label{math:opt2}
   \begin{align}
      d_i= \min_{K_{\mathcal{P}},K_\mathcal{C},\rho}~ & \Vert K_\mathcal{C} - K_\mathcal{P} \Vert_F \\
       \text{s.t.}~ & f_i(K_\mathcal{P},\rho) \leq 0, \label{math:c1}\\
       & \operatorname{vec}( K_\mathcal{C})\in\mathcal{C},  \label{math:c2}\\
       & \rho \in\mathcal{P},  \label{math:c3}
      \end{align}
   \end{subequations}
   are feasible and admit strictly positive solutions $d_i>0$.
\end{proposition}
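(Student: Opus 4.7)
The plan is to argue by contraposition. I would show that if $\mathcal{C} \not\subseteq \operatorname{vec}(\mathcal{K}_{\mathcal{P}})$, then at least one of the minima $d_i$ is bounded above by zero, contradicting the hypothesis that $d_i > 0$ for every $i$.

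First, I would unpack the negation using the definition $\mathcal{K}_{\mathcal{P}} = \bigcap_{\rho \in \mathcal{P}} \mathcal{K}_\rho$: if $\mathcal{C} \not\subseteq \operatorname{vec}(\mathcal{K}_{\mathcal{P}})$, there must exist some $K^\star \in \mathbb{R}^{m\times n}$ together with some $\rho^\star \in \mathcal{P}$ such that $\operatorname{vec}(K^\star) \in \mathcal{C}$ but $K^\star \notin \mathcal{K}_{\rho^\star}$. Next, I would invoke Lemma \ref{lemma:stable_poly}, which characterizes $\mathcal{K}_{\rho^\star}$ as the intersection of the open half-spaces $\{K : f_j(K,\rho^\star) > 0\}$ for $j = 1,\ldots,n$. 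The exclusion $K^\star \notin \mathcal{K}_{\rho^\star}$ therefore forces the existence of at least one index $i^\star \in \{1,\ldots,n\}$ with $f_{i^\star}(K^\star,\rho^\star) \leq 0$.

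With $i^\star$ in hand, I would produce a feasible triple for optimization problem \eqref{math:opt2} with index $i^\star$ by setting $K_\mathcal{P} = K^\star$, $K_\mathcal{C} = K^\star$, and $\rho = \rho^\star$. Constraint \eqref{math:c1} holds by the choice of $i^\star$, constraint \eqref{math:c2} holds because $\operatorname{vec}(K^\star) \in \mathcal{C}$, and constraint \eqref{math:c3} holds because $\rho^\star \in \mathcal{P}$. Since the objective evaluates to $\Vert K_\mathcal{C} - K_\mathcal{P} \Vert_F = 0$, one obtains $d_{i^\star} \leq 0$, contradicting the standing assumption $d_i > 0$ for all $i$. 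The contrapositive then yields $\mathcal{C} \subseteq \operatorname{vec}(\mathcal{K}_{\mathcal{P}})$.

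I do not anticipate a substantive obstacle here; the only mild subtlety is the boundary case $f_{i^\star}(K^\star,\rho^\star) = 0$, corresponding to $K^\star$ sitting precisely on $\partial \mathcal{K}_{\rho^\star}$. This case is absorbed uniformly by the non-strict inequality in \eqref{math:c1}, which is exactly why the problem is posed with $\leq 0$ instead of $< 0$; a strict version could fail to witness boundary violations and would break the argument. The proof is thus essentially a clean combination of the set-theoretic definition of $\mathcal{K}_{\mathcal{P}}$ with the explicit polynomial characterization supplied by Lemma \ref{lemma:stable_poly}.
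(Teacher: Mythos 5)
Your proof is correct and is essentially the paper's own argument stated in contrapositive form: the paper observes directly that $d_i>0$ forces $\mathcal{C}$ to avoid the region where $f_i(K,\rho)\leq 0$ for every $\rho\in\mathcal{P}$, while you make this precise by exhibiting the zero-objective feasible triple $(K_\mathcal{P},K_\mathcal{C},\rho)=(K^\star,K^\star,\rho^\star)$ that any violating point would produce, which is arguably cleaner than the paper's informal ``distance to the boundary'' phrasing. One cosmetic quibble: the sets $\{K : f_j(K,\rho^\star)>0\}$ from Lemma~\ref{lemma:stable_poly} are superlevel sets of polynomials, not open half-spaces, but this does not affect the logic.
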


\begin{proof}
   The optimal value of \eqref{math:opt2} represents the minimum distance between the boundary defined by $f_i(K_{\mathcal{P}},\rho) = 0$ and the set $\mathcal{C}$ over all $\rho \in \mathcal{P}$. If $d_i > 0$, then \eqref{math:c1} ensures that $\mathcal{C}$ lies entirely in the region where $f_i(K_{\mathcal{P}},\rho) > 0$. Conversely, if $d_i = 0$, then $\mathcal{C}$ intersects the boundary, implying $\mathcal{C} \not\subseteq \operatorname{vec}(\mathcal{K}_{\mathcal{P}})$. Thus, if all $d_i$ are strictly positive, 
    $\mathcal{C}$ is fully contained in the region where $f_i(K_\mathcal{P},\rho) > 0$ for all $i=1,\dots,n$ and $\rho\in\mathcal{P}$. Since this region characterizes $\mathcal{K}_{\mathcal{P}}$ (see Lemma~\ref{lemma:stable_poly}), the condition $\mathcal{C}\subseteq\operatorname{vec}( \mathcal{K}_{\mathcal{P}})$ holds.
\end{proof}
Note that while the constraints \eqref{math:c2} and \eqref{math:c3} are convex inequalities, the optimization problems \eqref{math:opt2} are generally nonconvex for $n > 2$ due to \eqref{math:c1}, which is a polynomial inequality in the entries of $K$ and $\rho$.

The union of the optimal feedback gains $K^*_\rho$ for all $\rho\in\mathcal{P}$ is denoted by $\mathcal{K}^* = \bigcup_{\rho\in\mathcal{P}} \{ K^*_\rho \}$.
The set $\mathcal{K}^*$ is challenging to characterize exactly, as it is implicitly defined by the maximal solutions of the CARE for all $\rho \in \mathcal{P}$. We avoid this issue by utilizing an over-approximation of $\mathcal{K}^*$, which is justified by the following properties of $\mathcal{K}^*$.

\begin{lemma}
   \label{lemma:K_star}
  The set $\mathcal{K}^*$ is compact and path-connected, and the mapping $\rho \mapsto K^*_\rho$ is continuous.
\end{lemma}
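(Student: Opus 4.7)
The strategy is to establish the continuity of the map $\rho \mapsto K^*_\rho$ first; the compactness and path-connectedness of $\mathcal{K}^*$ then follow as immediate topological consequences.

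For the continuity, I note that under Assumption~\ref{ass:1} the pair $(A(\rho),B)$ is stabilizable and $(A(\rho),\sqrt{Q})$ detectable for every $\rho \in \mathcal{P}$, so the CARE \eqref{math:CARE} (with $A(\rho)$ in place of $A$) admits a unique positive semidefinite stabilizing solution $P^*_\rho$, and the closed-loop matrix $A(\rho) - BR^{-1}B^\top P^*_\rho$ is Hurwitz. Since $A(\rho)$ is affine in $\rho$, the data of the CARE depends continuously on $\rho$. Invoking the classical result on the continuous dependence of the stabilizing CARE solution on its coefficient matrices, I conclude that $\rho \mapsto P^*_\rho$ is continuous on $\mathcal{P}$. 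Consequently, $K^*_\rho = R^{-1}B^\top P^*_\rho$ is continuous in $\rho$ as well.

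Given continuity, the two structural claims are one-liners: $\mathcal{K}^*$ is the image of $\mathcal{P}$ under the continuous map $\rho \mapsto K^*_\rho$. Because $\mathcal{P}$ is a compact polytope, it is both compact and (being convex) path-connected, so its continuous image $\mathcal{K}^*$ inherits both properties.

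The one nontrivial ingredient is the continuity of the stabilizing Riccati solution, which is the only real obstacle. If a self-contained argument is preferred over citation, I would apply the implicit function theorem to
\[
F(P,\rho) := A(\rho)^\top P + PA(\rho) - PBR^{-1}B^\top P + Q
\]
at $(P^*_\rho,\rho)$: the partial Fréchet derivative $D_P F(P^*_\rho,\rho)$ is the Lyapunov operator $\Delta \mapsto \bar A^\top \Delta + \Delta \bar A$ with $\bar A := A(\rho) - BR^{-1}B^\top P^*_\rho$, and its invertibility follows from $\bar A$ being Hurwitz. This yields local smoothness of $\rho \mapsto P^*_\rho$, which combined with the compactness of $\mathcal{P}$ and the uniqueness of the stabilizing solution gives global continuity on $\mathcal{P}$.
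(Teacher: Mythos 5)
Your proposal is correct and follows essentially the same route as the paper: establish continuity of $\rho \mapsto P^*_\rho \mapsto K^*_\rho$ (the paper cites the continuous dependence of the maximal CARE solution on its coefficients, for which your implicit-function-theorem sketch via the invertible Lyapunov operator at the Hurwitz closed loop is a valid self-contained substitute), then obtain compactness and path-connectedness of $\mathcal{K}^*$ as the continuous image of the compact, convex polytope $\mathcal{P}$.
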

\begin{proof}
   We first show that the composite mapping
   \begin{align}
      \rho \mapsto A(\rho) \mapsto P^*(\rho) \mapsto K^*_\rho,
   \end{align}
   is a composition of continuous functions. Since $A(\rho)$ depends linearly on $\rho$, it is trivially continuous in $\rho$.  
   By \cite[Th. 11.2.1]{lancaster1995algebraic}, the maximal solution $P^*(\rho)$ of the CARE depends continuously on $A(\rho)$.
 The function $K^*_\rho=R^{-1}B^\top P^*(\rho)$ inherits the continuity, because the matrices $R$ and $B$ are constant. Hence, the composite mapping $\rho\mapsto K^*_\rho$ is continuous.
   Since $\mathcal{P}$ is compact and path-connected by definition, the image $K^*_\mathcal{P}=\mathcal{K}^*$ of the continuous function $\rho\mapsto K^*_\rho$ is also compact and path-connected \cite[Th. 23.5 and Th. 26.5]{munkres2014topology}.
\end{proof}

\begin{proposition}
   \label{prop:1}
   The set $\mathcal{K}^*$ can be over-approximated by a  hyperrectangle $\mathcal{C}\subset \mathbb{R}^{mn}$, i.e., $ \operatorname{vec}(\mathcal{K}^*) \subseteq \mathcal{C}$, defined as
   \begin{align}
      \mathcal{C} = \prod_{i=1}^{mn} [\underline{K}_i, \overline{K}_i],
   \end{align}
   where $\underline{K}_i,\overline{K}_i\in\mathbb{R}$. The bounds $\underline{K}_i$ and $\overline{K}_i$ of the $i$-th element of $\operatorname{vec}(K)$ can be determined by solving the following optimization problem for each $i=1,\dots,mn$
   \begin{subequations}
      \label{math:opt_C}   
      \begin{align}
      \min_{K,P,\rho} & \pm K_i \\
      \text{s.t. } & A(\rho)^\top P + P A(\rho) - PBR^{-1}B^\top P + Q = 0, \label{math:CARE_opt} \\
      & K = - R^{-1} B^\top P, \\
      & \rho\in\mathcal{P},\\ & P\succ 0.
   \end{align}
\end{subequations}
\end{proposition}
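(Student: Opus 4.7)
The plan is to establish two facts: (i) the optimization problem \eqref{math:opt_C} admits finite optimal values $\underline{K}_i, \overline{K}_i$ for each $i = 1,\dots,mn$; and (ii) every coordinate of every optimal LQR gain $K^*_\rho$ with $\rho \in \mathcal{P}$ lies in $[\underline{K}_i, \overline{K}_i]$. Once both hold, the desired inclusion $\operatorname{vec}(\mathcal{K}^*) \subseteq \mathcal{C}$ follows directly from the definition of $\mathcal{C}$ as the Cartesian product of these coordinate intervals, and the proposition is proved.

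For (ii), I would first verify that for every $\rho \in \mathcal{P}$ the LQR-optimal triple $(K^*_\rho, P^*(\rho), \rho)$ is feasible for \eqref{math:opt_C}. Under Assumption~\ref{ass:1}, stabilizability and detectability guarantee that the CARE~\eqref{math:CARE_opt} admits the unique stabilizing positive (semi)definite solution $P^*(\rho)$ (cf.\ \cite[Th.~11.2.1]{lancaster1995algebraic}, already invoked in the proof of Lemma~\ref{lemma:K_star}), and $K^*_\rho$ matches the gain law in \eqref{math:opt_C} by construction. Assuming $P^*(\rho)\succ 0$, the triple is feasible; since $\underline{K}_i$ and $\overline{K}_i$ are the minimum and maximum of the linear functional $K_i$ over the feasible set, one obtains $\underline{K}_i \leq (K^*_\rho)_i \leq \overline{K}_i$ for every $\rho \in \mathcal{P}$ and every $i$, hence $\operatorname{vec}(K^*_\rho) \in \mathcal{C}$ and therefore $\operatorname{vec}(\mathcal{K}^*) \subseteq \mathcal{C}$.

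For (i), I would argue by compactness. Uniqueness of the positive semidefinite solution of the CARE under Assumption~\ref{ass:1} implies that the equality constraint~\eqref{math:CARE_opt} together with $P \succ 0$ pins $P$ to $P^*(\rho)$, so the feasible set reduces to $\{(K^*_\rho, P^*(\rho), \rho) : \rho \in \mathcal{P}\}$. This set is the continuous image of the compact $\mathcal{P}$ under the map $\rho \mapsto (K^*_\rho, P^*(\rho), \rho)$, whose continuity was established in the proof of Lemma~\ref{lemma:K_star}, and is therefore compact. The Weierstrass extreme value theorem then ensures existence and finiteness of the coordinate-wise infima and suprema. The principal obstacle is justifying $P^*(\rho) \succ 0$ for every $\rho \in \mathcal{P}$, which is required both for the LQR-optimal triple to be feasible and to rule out an empty or unbounded feasible set; under Assumption~\ref{ass:1} this typically demands a mild additional hypothesis (for instance $Q \succ 0$ or observability of $(A(\rho), \sqrt{Q})$) that upgrades the maximal Riccati solution from semidefiniteness to strict positive definiteness uniformly on the compact set $\mathcal{P}$.
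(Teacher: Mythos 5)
Your proof is correct and follows essentially the same route as the paper's (much terser) argument: the paper likewise invokes Lemma~\ref{lemma:K_star} for boundedness of $\mathcal{K}^*$ and notes that the optimization problems \eqref{math:opt_C} yield tight coordinate-wise bounds because every optimal gain $K^*_\rho$ is feasible via the CARE for some $\rho\in\mathcal{P}$. Your closing caveat about establishing $P^*(\rho)\succ 0$ is a fair observation, but the paper sidesteps it by already asserting positive definiteness of the stabilizing CARE solution in its preliminaries under Assumption~\ref{ass:1}.
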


\begin{proof}
   By Lemma \ref{lemma:K_star}, the set $\mathcal{K}^*$ is bounded, so $\mathcal{C}\supseteq \mathcal{K}^*$ exists. The optimal values of the optimization problems \eqref{math:opt_C} provide tight bounds on each element of $K$ by satisfying the CARE \eqref{math:CARE_opt} and considering all admissible parameters $\rho\in\mathcal{P}$.
\end{proof}

\begin{remark}
   The optimization problem \eqref{math:opt_C} is nonconvex due to the CARE \eqref{math:CARE_opt} and is challenging to solve, especially for high-dimensional systems. 
   However, in the following, we assume that the unique global solution of \eqref{math:opt_C} can be obtained.
\end{remark}

\begin{assumption}
   \label{ass:opt}
   Let $ \operatorname{vec}( \mathcal{K}^*) \subseteq \operatorname{int}( \mathcal{C} )$ and $\mathcal{C}\subseteq \operatorname{vec}( \mathcal{K}_{\mathcal{P}})$ hold. 
\end{assumption}
Note that $\mathcal{C} \subseteq \operatorname{vec}(\mathcal{K}_{\mathcal{P}})$ ensures quadratic stability of subsystem \eqref{math:subsys_1} (see Proposition ~\ref{prop:bound}, Theorem ~\ref{theo:1}). Assumption~\ref{ass:opt} further requires that the optimal feedback gains lie strictly within $\mathcal{C}$, which represents a more conservative requirement.

\begin{remark}
   \label{rem:strict}
   The condition $ \operatorname{vec}( \mathcal{K}^*) \subseteq \operatorname{int}( \mathcal{C} )$ could practically be achieved by adding a small margin $\varepsilon>0$ to the $2mn$ inequalities \eqref{math:ineq} of $\mathcal{C}$, i.e., $\underline{K}_i + \varepsilon \leq K_i \leq \overline{K}_i - \varepsilon$, where $\overline{K}_i,\underline{K}_i$ are as in Proposition \ref{prop:1}.
\end{remark}

\begin{theorem}
    \label{theo:3}
  Let Assumption \ref{ass:opt} and $\rho(t)=\rho\in\mathcal{P}$ hold. 
   Then, the projected gradient flow \eqref{math:dyn_1} of the dynamic controller solves the LQR optimization problem 
   \begin{align}
      \label{math:opt_2}
      \min_{K\in\mathcal{K}_\rho} f_K(\rho)
   \end{align}
   and its trajectory $K(t)$ converges to the optimal feedback $ K^*_\rho$.
\end{theorem}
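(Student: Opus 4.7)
The plan is to treat \eqref{math:dyn_1} with $\rho(t)\equiv\rho$ as an autonomous system on $\mathcal{C}$ and use the LQR cost $V(K) := f_K(\rho) = \operatorname{tr}(P_K(\rho))$ as a Lyapunov function, concluding convergence to $K^*_\rho$ via LaSalle's invariance principle together with the known uniqueness of the stationary point of the LQR cost on $\mathcal{K}_\rho$.

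First I would observe that Assumption~\ref{ass:opt} gives $\mathcal{C}\subseteq\operatorname{vec}(\mathcal{K}_{\mathcal{P}})\subseteq\operatorname{vec}(\mathcal{K}_\rho)$, so Proposition~\ref{prop:bound} confines $\operatorname{vec}(K(t))$ to $\operatorname{int}(\mathcal{C})$ for all $t\geq 0$, and both $\nabla f_K(\rho)=2(RK-B^\top P_K(\rho))Y_K(\rho)$ and $M(K)$ are smooth along the trajectory. A direct computation gives $\dot V = -\alpha\,\nabla f_K(\rho)^\top M(K)\nabla f_K(\rho)\leq 0$, since $M(K)\succeq 0$ as a projection onto the tangent cone, and $V(K)\geq V(K^*_\rho)=\operatorname{tr}(P^*_\rho)$ bounds $V$ below. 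The box structure of $\mathcal{C}$ makes $M(K)$ in \eqref{math:projected_grad}--\eqref{math:f_x} diagonal with strictly positive entries on $\operatorname{int}(\mathcal{C})$; hence in the interior $M(K)\nabla f_K(\rho)=0$ is equivalent to $\nabla f_K(\rho)=0$, and the latter---using $Y_K\succ 0$ on $\mathcal{K}_\rho$ and the Lyapunov equation \eqref{math:Ly}---reduces to the CARE~\eqref{math:CARE}. By Assumption~\ref{ass:1} its unique positive-definite solution yields $K^*_\rho$, so $V$ is a strict Lyapunov function on $\operatorname{int}(\mathcal{C})\setminus\{K^*_\rho\}$ with $K^*_\rho$ the unique interior equilibrium.

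LaSalle's invariance principle applied on the compact set $\mathcal{C}$ then implies that $K(t)$ approaches the largest invariant subset of $\mathcal{E}=\{K\in\mathcal{C}:M(K)\nabla f_K(\rho)=0\}$, and it remains to rule out accumulation on $\partial\mathcal{C}$. Since $K^*_\rho\in\operatorname{int}(\mathcal{C})$ is the unique global minimizer of $V$ on $\mathcal{C}$, the sublevel sets $\{V\leq c\}$ lie entirely in $\operatorname{int}(\mathcal{C})$ for every $c$ sufficiently close to $V(K^*_\rho)$; if the monotone limit satisfies $\lim V(K(t))=V(K^*_\rho)$ the trajectory eventually enters such a set and converges to $K^*_\rho$ by the strict Lyapunov property. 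The remaining case $\lim V(K(t))>V(K^*_\rho)$ would force $K(t)$ to accumulate at a boundary KKT point $K^{\mathrm{bd}}$; but $K^{\mathrm{bd}}$ is not a local minimum of $V$ on $\mathcal{C}$ (the global minimum is strictly interior by Assumption~\ref{ass:opt}), so it admits a feasible descent direction into $\operatorname{int}(\mathcal{C})$, and a local analysis---exploiting that $M_{ii}(K)\to 0$ only in active-constraint coordinates while remaining bounded away from zero in the others---shows that no interior trajectory can be attracted to $K^{\mathrm{bd}}$. Making this last step fully rigorous is the main obstacle; it could be handled by linearizing \eqref{math:dyn_1} near $\partial\mathcal{C}$ or by invoking the projected gradient flow convergence results of \cite{jongen2003constrained}. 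Once this is settled, $K(t)\to K^*_\rho$, and since $K^*_\rho$ is the unconstrained LQR optimum on $\mathcal{K}_\rho$ it solves \eqref{math:opt_2}.
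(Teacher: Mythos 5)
Your Lyapunov/LaSalle setup is sound and is a genuinely different route from the paper's: you compute $\dot V=-\alpha\,\nabla f_K^\top M(K)\nabla f_K\le 0$ directly and argue from first principles, whereas the paper simply verifies the three standing hypotheses of the projected-gradient-flow framework of Subsection \ref{subsec:proj} (compactness and connectedness of $\mathcal{C}$; LICQ, which holds trivially along the trajectory since $\operatorname{vec}(K(t))$ stays in $\operatorname{int}(\mathcal{C})$ by Proposition \ref{prop:bound}; and nondegeneracy of the unique interior KKT point $K^*_\rho$, obtained from the positive definiteness of the Hessian of $f_K$ at $K^*_\rho$ via \cite[Prop.~3.4]{bu2020clqr}) and then invokes the convergence result of \cite{jongen2003constrained}. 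Your observation that $M(K)$ is diagonal with strictly positive entries on $\operatorname{int}(\mathcal{C})$ for box constraints is correct and makes the identification of the interior equilibrium with the CARE solution clean.

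The genuine gap is exactly the one you flag: excluding $\omega$-limit points on $\partial\mathcal{C}$. LaSalle only gives convergence to the largest invariant subset of $\mathcal{E}=\{K\in\mathcal{C}: M(K)\nabla f_K(\rho)=0\}$, and because $M_{ii}(K)$ vanishes in every active coordinate, $\mathcal{E}\cap\partial\mathcal{C}$ is typically a large set (every boundary point at which the \emph{inactive} components of $\nabla f_K$ vanish is an equilibrium of the projected flow), not just isolated KKT points. Your argument that a boundary KKT point admits a feasible descent direction does not by itself rule out attraction: gradient-type flows can converge to saddle points along stable manifolds, and here the additional degeneracy of the vector field near $\partial\mathcal{C}$ (the $M_{ii}\to 0$ effect) makes the local analysis you defer genuinely nontrivial. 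This is precisely the step the paper outsources to \cite{jongen2003constrained}, whose convergence statement (initial state and optimizer residing in the same stratum $\operatorname{int}(\mathcal{C})$, all KKT points nondegenerate) is what licenses the conclusion $K(t)\to K^*_\rho$. So your proof is incomplete as written; the fix is either to carry out the boundary analysis rigorously or to do what the paper does and check the hypotheses of the cited framework, in which case the bespoke LaSalle machinery becomes unnecessary.
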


\begin{proof}
   The cost $f_K(\rho)$ has a single stationary point $K_\rho^*$ in  $\mathcal{K}_\rho$ (see \cite[Lem. 4.1]{bu2020clqr}).    
   Since  $\operatorname{vec}( \mathcal{K}^*) \subseteq \operatorname{int}( \mathcal{C} )$ and $\mathcal{C}\subseteq \operatorname{vec}( \mathcal{K}_{\mathcal{P}})$, $K^*_\rho$ is the only stationary point within $\operatorname{int }(\mathcal{C})$.
   To confirm that \eqref{math:dyn_1} solves \eqref{math:opt_2}, we need to verify the three assumptions mentioned in Subsection \ref{subsec:proj}.
   By definition in \eqref{math:C}, $\mathcal{C}$ is compact and connected. 
   Since $\operatorname{vec}(K(t) )$ evolves within $\mathcal{C}$ (see Proposition \ref{prop:bound}), no constraints are active, trivially satisfying the LICQ.
   Within $\operatorname{int}(\mathcal{C})$, there is only one KKT point $K^*_\rho$ due to the single stationary point. 
   The Hessian of the Lagrangian function at $ K^*_\rho$ corresponds to the Hessian of $f_K(\rho)$ which is positive definite (see \cite[Proposition 3.4]{bu2020clqr}), ensuring the nondegeneracy of the KKT point.
   With all assumptions fulfilled, the projected gradient flow \eqref{math:dyn_1} solves \eqref{math:opt_2} and $K(t)$ converges to $K^*_\rho$.
\end{proof}

\begin{remark}
   \label{rem:5}
   Let Assumption \ref{ass:opt} and $\rho(t)\in \mathcal{F}_{\mathcal{P}}^{\text{pc}}$ hold.
   If the learning rate $\alpha$ of the dynamic controller \eqref{math:dyn_1} is sufficiently large, the trajectory $\operatorname{vec}(K(t))$ converges quickly to the optimal gain $K^*_\rho$ in each time interval with a constant parameter. Once $K(t)$ converges, the states $x$ of subsystem \eqref{math:subsys_1} are regulated by the optimal feedback gain $K^*_\rho$, ensuring optimal stabilization with respect to the LQR cost most of the time. 
\end{remark}

\subsection{Stability of Frozen-Time Closed-Loop Systems}
\label{subsection:frozen}

Previously, we analyzed the stability of subsystem \eqref{math:subsys_1} using quadratic stability, which can be conservative. Less conservative stability methods consider slowly-varying parameter trajectories but assume asymptotic frozen-time stability \cite{ILCHMANN1987}. Before establishing this property for system \eqref{math:closed_loop}, we present a lemma characterizing the definiteness of the matrix~\eqref{math:projected_grad}.

\begin{lemma}
   \label{lemma:5}
   Let $\mathcal{M}=\{x\in\mathbb{R}^n \mid g(x)\geq 0\}$, where $g:\mathbb{R}^n \to \mathbb{R}^l$ is a continuously differentiable function with $n\leq l$, and $\operatorname{rank} J_g(x) = n$ for all $x\in\mathcal{M}$. Consider the matrices \eqref{math:projected_grad} and \eqref{math:f_x}.
   Then, $F(x)\succ 0$ for all $x\in\mathcal{M}$, and $M(x)\succ 0$ if $x\in \operatorname{int }(\mathcal{M})$ and $M(x)\succeq 0$ if $x\in \partial \mathcal{M}$.
\end{lemma}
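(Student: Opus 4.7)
The plan is to establish $F(x) \succ 0$ first, and then deduce both definiteness claims about $M(x)$ in one stroke via a Schur complement argument.

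For $F(x) \succ 0$, I would expand the quadratic form for $v \in \mathbb{R}^l$:
\begin{equation*}
v^\top F(x) v \;=\; 2\sum_{i=1}^{l} g_i(x) v_i^2 \;+\; \|J_g(x)^\top v\|_2^2,
\end{equation*}
which is a sum of two nonnegative terms on $\mathcal{M}$. If $v^\top F(x) v = 0$, both summands vanish: the first forces $v_i = 0$ for every inactive index (those with $g_i(x) > 0$), and the second then reads $J_g(x)^\top v = \sum_{i\in\mathcal{A}(x)} v_i \nabla g_i(x) = 0$, where $\mathcal{A}(x)$ is the active set at $x$. The LICQ hypothesis standing at every point of $\mathcal{M}$ in Subsection~\ref{subsec:proj} ensures that $\{\nabla g_i(x)\}_{i\in\mathcal{A}(x)}$ is linearly independent, so the remaining $v_i$ also vanish, giving $v = 0$ and hence $F(x) \succ 0$.

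For $M(x)$, my plan is to apply Schur complements to the symmetric block matrix
\begin{equation*}
H(x) \;=\; \begin{bmatrix} F(x) & J_g(x) \\ J_g(x)^\top & I_n \end{bmatrix}.
\end{equation*}
Pivoting on the $(1,1)$-block (which is legitimate since $F(x) \succ 0$) yields $H(x) \succeq 0$ if and only if $M(x) = I_n - J_g(x)^\top F(x)^{-1} J_g(x) \succeq 0$, and analogously for strict signs. Pivoting instead on the $(2,2)$-block $I_n \succ 0$ yields $H(x) \succeq 0$ if and only if $F(x) - J_g(x) J_g(x)^\top = 2\operatorname{diag}(g(x)) \succeq 0$, again with the strict analogue. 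If $x \in \operatorname{int}(\mathcal{M})$, then $g_i(x) > 0$ for all $i$, so $2\operatorname{diag}(g(x)) \succ 0$, hence $H(x) \succ 0$ and therefore $M(x) \succ 0$. If $x \in \partial\mathcal{M}$, only $2\operatorname{diag}(g(x)) \succeq 0$ is available, so $H(x) \succeq 0$ and $M(x) \succeq 0$.

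The main obstacle is really the first step: $F(x) \succ 0$ does not follow from $\operatorname{rank} J_g(x) = n$ alone, because duplicating a constraint preserves the column rank of $J_g$ yet can destroy positive definiteness of $F$ at the duplicated boundary. The nondegeneracy must therefore be extracted from the LICQ hypothesis of Subsection~\ref{subsec:proj}; flagging this dependence explicitly is the only subtle modelling point, after which the Schur complement computation handles the interior and the boundary cases uniformly.
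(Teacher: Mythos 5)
Your proof is correct and follows the same overall strategy as the paper: a direct quadratic-form argument for $F(x)\succ 0$, followed by a Schur-complement argument on an augmented block matrix for $M(x)$. The paper works with the permuted block matrix $\left[\begin{smallmatrix} I_n & J_g^\top(x) \\ J_g(x) & F(x)\end{smallmatrix}\right]$ and completes the square in its quadratic form, which is equivalent to your double-pivot Schur argument; that part is purely cosmetic. Where you genuinely improve on the paper is the step $F(x)\succ 0$: the paper disposes of the case $J_g(x)^\top v = 0$ by asserting that full row rank of the $n\times l$ matrix $J_g^\top(x)$ forces $v=0$, which is false when $l>n$, since the kernel of $J_g^\top(x)$ then has dimension $l-n>0$. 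Your repair --- the surviving components of $v$ are supported on the active set, and linear independence of the active gradients (the LICQ standing assumption of Subsection~\ref{subsec:proj}, automatic for the hyperrectangle \eqref{math:C} since the active gradients are distinct signed coordinate vectors) annihilates them --- is the correct one, and your duplicated-constraint example shows the lemma is actually false under the stated rank hypothesis alone. So flagging the dependence on LICQ is not merely a ``subtle modelling point'' but a necessary strengthening of the lemma's hypotheses that the paper's own proof glosses over.
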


\begin{proof}
   We first prove $F(x)\succ 0$. 
   The matrix $2 \operatorname{diag }(g(x))$ is positive semidefinite since the diagonal entries $g_i(x)$ are nonnegative and the gram matrix $J_g(x) J_g(x)^\top$ is positive semidefinite, implying $F(x)\succeq 0$. 
   Suppose there exists a nonzero vector $v\in\mathbb{R}^n$ such that $v^\top F(x)v =0$. This implies both $\sum_{i=1}^{l}g_i(x) v_i^2 = 0$ and $ \Vert J_g(x)^\top v \Vert ^2 = 0$. For indices $i$ where $g_i(x)>0$, $v_i=0$ must hold. But the equality $J_g^\top (x) v = 0$ can only hold for $v=0$ since $J_g^\top$ has full row rank by assumption, implying a contradiction. Thus, $F(x)\succ 0$ for all $x\in\mathcal{M}$. For the second claim, consider 
   the block matrix 
   \begin{align}
      Z(x) = \begin{bmatrix} I_n & J_g^\top (x) \\ J_g(x) & F(x) \end{bmatrix}.
   \end{align}
   For any vectors $v\in\mathbb{R}^n$ and $w\in\mathbb{R}^l$, the quadratic form associated with $Z(x)$ satisfies 
   {
   \begin{align}
      &\begin{bmatrix} v^\top & w^\top \end{bmatrix} Z(x) \begin{bmatrix}v \\ w \end{bmatrix} \\
      &=  v^\top v + 2 v^\top \! J_g^\top (x) w + w^\top \!\left(2 \operatorname{diag }(g(x) ) + J_g(x) J_g^\top(x)  \right) w \notag\\
 &=\Vert v + J_g(x)^\top w \Vert^2_2 + 2 w^\top \operatorname{diag }(g(x)) w. \label{math:z}
   \end{align}}The first term of \eqref{math:z} is nonnegative, and the second term is positive for any $w\neq 0$ if $x\in \operatorname{int }(\mathcal{M})$, where $g_i(x)>0$ for all $i$. This ensures $Z(x)\succ 0$, and its Schur complement $M(x)$ inherits this definiteness.
   If $x\in \partial \mathcal{M}$, then $g_i(x)=0$ for some $i$, and the second term of \eqref{math:z} is nonnegative. Consequently, $Z(x)\succeq 0$, and $M(x)$ inherits this definiteness.  
\end{proof}

\begin{theorem}
   \label{theo:froze}
   Let Assumption \ref{ass:opt}, $ Q\succ 0$ and $\rho(t)=\rho\in\mathcal{P}$ hold. 
   The equilibrium $\hat z = \begin{bmatrix} 0_n^\top , \operatorname{vec}^\top (K^*(\rho) )\end{bmatrix}^\top$ of the closed-loop system \eqref{math:closed_loop} is asymptotically stable.
\end{theorem}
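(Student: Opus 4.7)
The plan is to exploit the cascade structure of \eqref{math:closed_loop} under constant $\rho$: the controller subsystem \eqref{math:subsys_2} is autonomous in $K$ and drives $K(t)$ to $K^*_\rho$ by Theorem \ref{theo:3}, while the state subsystem \eqref{math:subsys_1} reduces in the limit to the Hurwitz LQR closed loop $\dot x = (A(\rho) - B K^*_\rho)x$. I would establish local asymptotic stability of $\hat z$ via the composite Lyapunov function
$$V(x, K) = x^\top P^*_\rho\, x + \beta\bigl(f_K(\rho) - f_{K^*_\rho}(\rho)\bigr),$$
where $P^*_\rho \succ 0$ is the stabilizing CARE solution for $(A(\rho), B, Q, R)$, $\beta > 0$ is to be fixed, and $\Delta K := K - K^*_\rho$. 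Positive definiteness of $V$ near $\hat z$ follows from $Q \succ 0$ (which gives $P^*_\rho \succ 0$) and from the fact, used in the proof of Theorem \ref{theo:3}, that $K^*_\rho \in \operatorname{int}(\mathcal{C})$ is the unique KKT point of $\min_{K \in \mathcal{C}} f_K(\rho)$ with positive-definite Hessian.

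For the time derivative, decomposing $A(\rho) - BK = (A(\rho) - B K^*_\rho) - B \Delta K$ and invoking the Lyapunov identity $(A(\rho) - B K^*_\rho)^\top P^*_\rho + P^*_\rho (A(\rho) - B K^*_\rho) = -Q - (K^*_\rho)^\top R K^*_\rho$ (which comes directly from the CARE and $K^*_\rho = R^{-1}B^\top P^*_\rho$) gives
$$\tfrac{d}{dt}\bigl(x^\top P^*_\rho x\bigr) = -x^\top\bigl(Q + (K^*_\rho)^\top R K^*_\rho\bigr) x - 2 x^\top P^*_\rho B\, \Delta K\, x.$$
Since $\operatorname{vec}(K(t)) \in \operatorname{int}(\mathcal{C})$ by Proposition \ref{prop:bound}, Lemma \ref{lemma:5} yields $M(K) \succ 0$ along trajectories, so
$$\tfrac{d}{dt}\bigl(f_K(\rho) - f_{K^*_\rho}(\rho)\bigr) = -\alpha\, \operatorname{vec}(\nabla f_K)^\top M(K) \operatorname{vec}(\nabla f_K) \le 0.$$
Because $\nabla f_{K^*_\rho} = 0$ (the interior minimum) and the Hessian of $f_K(\rho)$ at $K^*_\rho$ is positive definite, a Taylor expansion combined with continuity of $M(\cdot)$ yields a local bound $\operatorname{vec}(\nabla f_K)^\top M(K) \operatorname{vec}(\nabla f_K) \ge c \|\Delta K\|_F^2$ for some $c > 0$.

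To assemble the estimate, I bound the sign-indefinite cross term as $|2 x^\top P^*_\rho B \Delta K\, x| \le 2\|P^*_\rho B\|\|\Delta K\|_F \|x\|^2$ and restrict to the neighborhood $\{\|\Delta K\|_F < \delta\}$ with $\delta < \lambda_{\min}(Q)/(4\|P^*_\rho B\|)$. On this set, for any fixed $\beta > 0$,
$$\dot V \le -\tfrac{1}{2}\lambda_{\min}(Q)\,\|x\|^2 - \alpha \beta c \|\Delta K\|_F^2,$$
which is negative definite and yields local asymptotic stability of $\hat z$ by the standard Lyapunov theorem. The main obstacle is ensuring trajectories remain in $\{\|\Delta K\|_F < \delta\}$ where the local bounds hold; this is resolved by the cascade structure, since the $K$-subsystem is independent of $x$ and is locally asymptotically stable at $K^*_\rho$ by Theorem \ref{theo:3}, so restricting $\operatorname{vec}(K(0))$ to a sufficiently small neighborhood of $K^*_\rho$ in $\operatorname{int}(\mathcal{C})$ keeps $\|K(t) - K^*_\rho\|_F < \delta$ for all $t \ge 0$; equivalently, one may fix $\beta$ large enough that a chosen sublevel set of $V$ is contained in that neighborhood and invoke invariance of sublevel sets.
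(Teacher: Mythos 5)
Your proof is correct, but it follows a genuinely different route from the paper's. The paper uses the composite Lyapunov function $V(z)=x^\top P_K x + f_K(\rho)-f_{K^*_\rho}(\rho)$ with the \emph{$K$-dependent} matrix $P_K$ solving \eqref{math:Ly}; this eliminates any cross term (the $x$-part of $\dot V$ collapses exactly to $-x^\top(Q+K^\top RK)x$ via \eqref{math:Ly}), but it introduces the extra term $x^\top \dot P_K x$, which the paper disposes of by arguing $\operatorname{tr}(\dot P_K)=\dot f_K<0$ and inferring $\dot P_K\prec 0$ --- a step that is delicate, since a negative trace does not by itself imply negative definiteness. You instead freeze the quadratic weight at the constant $P^*_\rho$, which removes the $\dot P$ term entirely at the price of the sign-indefinite cross term $-2x^\top P^*_\rho B\,\Delta K\,x$; you then dominate that term by $-x^\top Qx$ on a neighborhood $\|\Delta K\|_F<\delta$, and you justify confinement to that neighborhood by the cascade structure (the $K$-subsystem is autonomous and $f_K-f_{K^*_\rho}$ is a strict local Lyapunov function for it). The trade-off is scope: your argument is fully rigorous but inherently local, exactly matching the statement of Theorem \ref{theo:froze}, whereas the paper's choice of $P_K$ is what lets it claim $\dot V<0$ on all of $\mathbb{R}^n\times\operatorname{int}(\mathcal{C})$ and thereby feed directly into the region-of-attraction result of Theorem \ref{theo:region}; your construction would not yield that global domain estimate without further work (e.g., a $K$-dependent bound on the cross term using that $\Delta K$ ranges over the compact set $\mathcal{C}$, which in general cannot be made small). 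Both approaches are legitimate; yours buys robustness of the argument, the paper's buys the larger invariant region.
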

\begin{proof}
   Let $z=\begin{bmatrix}x^\top, \operatorname{vec}^\top (K )\end{bmatrix}^\top$ be the state vector and  $V(z)= x^\top P_K x + f_K(\rho) - f_{K^*(\rho)}(\rho)$ be a Lyapunov candidate, where $P_K$ is as in \eqref{math:proj_grad}. 
   Then, $V(\hat z)=0$.
    Since $A_K(\rho)$ is Hurwitz stable for all $ K \in  \mathcal{K}_{\mathcal{P}}$ and $K^\top R K + Q \succ 0$, $P_K$ is positive definite (see \cite[Th. 8.2]{hespanha2018}). 
   The term $f_K(\rho)-f_{K^*_\rho}=\operatorname{tr }(P_K)-\operatorname{tr }(P^*_K(\rho))$ is positive for all $K\in\mathcal{K}_{\mathcal{P}}\setminus \{K^*_\rho \}$, because the LQR cost is strictly positive and has its minimum at $K^*_\rho$. 
   Therefore, $V(z)>0$ for all $z\in \mathbb{R}^n \times \mathcal{C}\setminus \{\hat z\}$.
   The time derivative of $V(z)$ is given by 
   \begin{align}
      \dot V(z) &= x^\top P_K \dot x + \dot{x}^\top P_K x  + x^\top \dot{P}_K x + \dot{f}_K(\rho). \label{math:V_dot}
   \end{align}
   By using \eqref{math:Ly} and \eqref{math:subsys_1}, we obtain
   \begin{align}
      x^\top P_K \dot x + \dot{x}^\top P_K x  = - x^\top \left(Q + K^\top R K \right) x, \label{math:V_dot_1}
   \end{align}
   where $Q+K^\top R K \succ 0$.  
   The last term of \eqref{math:V_dot} is given by
   {
   \begin{align}
      \dot{f}_K(\rho) \!=\! \frac{\partial f_K(\rho ) }{\partial \operatorname{vec}(K ) } \operatorname{vec}(\dot K ) \! = \! - \alpha \frac{\partial f_K(\rho )}{\partial \operatorname{vec}(K )}  M(K) \frac{\partial f_K(\rho )}{\partial \operatorname{vec}(K )}^\top \label{math:V_dot_2}
   \end{align}}where $M(K)$ is as in \eqref{math:proj_grad} and $M(K)\succeq 0$ (see Lemma \ref{lemma:5}). Therefore, $\dot{f}_K(\rho) \leq 0$ for $K\in\mathcal{K}_{\mathcal{P}}\setminus \{ K^*_\rho \}$ and $\dot{f}_K = 0 $ for $K=K^*_\rho$.    
   Since $ \dot{f}_K = \operatorname{tr }(\dot{P}_K)=\sum_i \lambda_i(\dot{P}_K)$ and $\dot{f}_K<0$ for $ \operatorname{vec}(K ) \in\mathcal{C}\setminus \{\operatorname{vec}(K^*(\rho) ) \}$, we conclude $\dot{P}_K \prec 0$ over this domain.
   Combining \eqref{math:V_dot_1}, \eqref{math:V_dot_2} and $\dot P_K \prec 0$, we obtain $\dot V(z) < 0$ for all $z\in \mathbb{R}^n \times \mathcal{C}\setminus \{\hat z\}$, implying $V(z)$ is a proper Lyapunov function and $\hat{z}$ is asymptotically stable. 
\end{proof}

\begin{theorem}
   \label{theo:region}
   Let Assumption \ref{ass:opt}, $ Q\succ 0$ and $\rho(t)=\rho\in\mathcal{P}$ hold. 
   The region of attraction (RoA) of the equilibrium $\hat z = \begin{bmatrix} 0_n^\top , \operatorname{vec}^\top (K^*(\rho) )\end{bmatrix}^\top$ of the closed-loop system \eqref{math:closed_loop} is given by $\mathbb{R}^n \times \operatorname{int}(\mathcal{C})$.
\end{theorem}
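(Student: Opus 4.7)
The plan is to prove the two set inclusions $\mathbb{R}^n \times \operatorname{int}(\mathcal{C}) \subseteq \mathrm{RoA}(\hat z)$ and $\mathrm{RoA}(\hat z) \subseteq \mathbb{R}^n \times \operatorname{int}(\mathcal{C})$ separately, reusing the Lyapunov candidate $V(z) = x^\top P_K x + f_K(\rho) - f_{K^*_\rho}(\rho)$ introduced in Theorem~\ref{theo:froze}, and combining Theorem~\ref{theo:3} (which handles the $K$-subsystem) with a Barbalat-type argument (which handles the $x$-subsystem). Proposition~\ref{prop:invariant} supplies the pivotal structural fact: $\operatorname{int}(\mathcal{C})$ and $\partial \mathcal{C}$ are each invariant under the projected gradient flow, and the first invariance produces the forward inclusion while the second produces the reverse inclusion.

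For the forward inclusion, I would fix an arbitrary $(x_0, \operatorname{vec}(K_0)) \in \mathbb{R}^n \times \operatorname{int}(\mathcal{C})$. Since \eqref{math:subsys_2} is autonomous in $K$ and independent of $x$, Theorem~\ref{theo:3} directly yields $K(t) \to K^*_\rho$. For the $x$-component, I would first note the uniform lower bound $\lambda_{\min}(P_K) \geq \lambda_0$ for some $\lambda_0 > 0$ on $\operatorname{vec}(K) \in \mathcal{C}$, which follows from compactness of $\mathcal{C}$, continuity of $K \mapsto P_K$ on $\mathcal{C} \subseteq \operatorname{vec}(\mathcal{K}_\mathcal{P})$, and $P_K \succ 0$ throughout $\mathcal{K}_\mathcal{P}$ (as exploited in the proof of Theorem~\ref{theo:froze}). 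Together with $\dot V \leq 0$, this gives $\lambda_0 \|x(t)\|^2 \leq V(z(t)) \leq V(z(0))$, so $x(t)$ is bounded. Integrating the first term of $\dot V$ from \eqref{math:V_dot_1} and exploiting $Q \succ 0$ to secure a uniform coercivity constant $Q + K^\top R K \succeq q I$ on the compact set $\mathcal{C}$, I would obtain $\int_0^\infty \|x(\tau)\|^2 \, \mathrm{d}\tau < \infty$. Boundedness of $\dot x = (A(\rho) - BK(t))x$, which follows from compactness of $\mathcal{C}$ and boundedness of $x$, implies that $x$ is uniformly continuous; Barbalat's lemma then forces $x(t) \to 0$, completing $z(t) \to \hat z$.

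For the reverse inclusion, I would take any initial condition with $\operatorname{vec}(K_0) \in \partial \mathcal{C}$. By Proposition~\ref{prop:invariant}, $\operatorname{vec}(K(t)) \in \partial \mathcal{C}$ for all $t \geq 0$. Assumption~\ref{ass:opt} places $\operatorname{vec}(K^*_\rho)$ strictly inside $\operatorname{int}(\mathcal{C})$, so the closed sets $\partial \mathcal{C}$ and $\{\operatorname{vec}(K^*_\rho)\}$ are separated by a positive distance, precluding $K(t) \to K^*_\rho$ and hence $z(t) \to \hat z$. No such initial state therefore lies in $\mathrm{RoA}(\hat z)$, so $\mathrm{RoA}(\hat z)$ contains no point with $K$-component on $\partial \mathcal{C}$.

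The main obstacle I expect is the convergence $x(t) \to 0$. A pure LaSalle argument would have to rule out $\omega$-limit points on $\partial \mathcal{C}$, where $M(K)$ is only positive semidefinite (Lemma~\ref{lemma:5}) and $\dot f_K = 0$ no longer singles out $K^*_\rho$; the Barbalat route bypasses this by discarding the projected-gradient term of $-\dot V$ altogether and relying solely on the $x$-quadratic part, whose uniform coercivity is precisely what $Q \succ 0$ together with compactness of $\mathcal{C}$ buys. Cleanly establishing those two uniform bounds, $\lambda_{\min}(P_K) \geq \lambda_0 > 0$ and $Q + K^\top R K \succeq q I$ on $\mathcal{C}$, is therefore the key technical step on which the argument rests.
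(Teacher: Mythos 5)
Your proposal is correct, but it reaches the conclusion by a genuinely different route for the $x$-part, and it supplies a step the paper leaves implicit. The paper's proof uses the same Lyapunov function and the same decoupling observation, handles the $K$-component exactly as you do (strict decrease of $V$ on $\operatorname{int}(\mathcal{C})$ plus invariance of $\operatorname{int}(\mathcal{C})$ from Proposition~\ref{prop:invariant}, which is the content of Theorem~\ref{theo:3}), but then disposes of the $x$-component in one line: $V$ is radially unbounded in $x$ and $\dot V<0$ for $x\neq 0$, hence the RoA in $x$ is all of $\mathbb{R}^n$. That step is terse, since the $x$-subsystem $\dot x=(A(\rho)-BK(t))x$ is non-autonomous and the domain $\mathbb{R}^n\times\operatorname{int}(\mathcal{C})$ is neither closed nor compact, so a textbook global-asymptotic-stability theorem does not apply verbatim; your Barbalat route --- uniform coercivity $Q+K^\top RK\succeq qI$ and $\lambda_{\min}(P_K)\geq\lambda_0>0$ on the compact $\mathcal{C}$, giving boundedness, $L^2$-integrability of $x$, and uniform continuity --- closes exactly the gap you identify (possible $\omega$-limit points on $\partial\mathcal{C}$ where $M(K)$ is only semidefinite), at the cost of a longer argument. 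You also prove the reverse inclusion explicitly via the invariance of $\partial\mathcal{C}$ and Assumption~\ref{ass:opt}, which the paper omits even though the theorem asserts an equality of sets. The one dependency you share with the paper, and do not re-derive, is that the $x^\top\dot P_K x$ contribution to $\dot V$ is nonpositive (needed both for $\dot V\leq 0$ and for extracting $\int_0^\infty x^\top(Q+K^\top RK)x\,\mathrm{d}\tau<\infty$ from the integrated inequality); this rests on the claim $\dot P_K\preceq 0$ established in the proof of Theorem~\ref{theo:froze}, so your argument is correct to exactly the same standard as the paper's.
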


\begin{proof}
   Consider the same Lyapunov function as in the proof of Theorem \ref{theo:froze}.
   Since the dynamics of $K$ are decoupled of the dynamics of $x$, we first analyze the RoA with respect to $K$, and then with respect to $x$. 
    For $K\in\operatorname{int}(\mathcal{C})$, $M(K)\succ 0$ (see Lemma \ref{lemma:5}), and $\dot V(z)<0 $ for all $ \operatorname{vec}(K )\in \operatorname{int}(\mathcal{C}) \setminus \{\operatorname{vec}(K^*(\rho))\}$, ensuring that $V(z)$ decreases strictly along the trajectories $\operatorname{vec}(K(t) )$ within $\operatorname{int}(\mathcal{C})$.   
   Moreover, the set $ \operatorname{ int }(\mathcal{C})$ is positively invariant for \eqref{math:dyn_1} due to the projection (see Proposition~\ref{prop:invariant}). 
   Combining the strict decrease of $V(z)$ and the invariance of $\operatorname{int}(\mathcal{C})$, we conclude that the RoA of $\operatorname{vec}(K^*(\rho) )$ is $\operatorname{int}(\mathcal{C})$.
   Since $P_K\succ0$ for all $K\in\mathcal{K}_{\mathcal{P}}$, $V(z)\to\infty$ as $\Vert x \Vert \to\infty$, implying that $V(z)$ is radially unbounded with respect to $x$. Additionally, $\dot V(z)<0$ for $x\neq 0$ (see \eqref{math:V_dot_1}).
   Therefore, the RoA of the equilibrium $\hat z$ is $\mathbb{R}^n \times \operatorname{int}(\mathcal{C}) $.
\end{proof}


\begin{remark}
   Since the closed-loop system \eqref{math:closed_loop} is not an autonomous LPV system (see Remark~\ref{remark:not_LPV}), the stability methods for slowly-varying systems (see \cite{ILCHMANN1987}) cannot be applied directly.
   Under Assumption \ref{ass:opt}, the trajectory of the dynamic controller satisfies $ \operatorname{vec}(K(t) ) \in\mathcal{F}_{\mathcal{C}}^\mathcal{V}$ (see Proposition~\ref{prop:bound}) 
    and the feedback gain $K$ can be treated as a varying parameter. 
    Given the asymptotic frozen-time stability of the subsystem \eqref{math:subsys_1}, \eqref{math:subsys_1} is exponential stable 
   if the rates of variation, i.e.,  $\Vert \dot{\rho}(t) \Vert $ and $\Vert \operatorname{vec}(\dot{K}(t) )  \Vert $,
   are sufficiently small (see \cite[Eq. (2.5)]{ILCHMANN1987}, \cite[Prop. 2.3.8]{briat_linear_2015} or \cite[Sec. 1.2.2.2]{briat_linear_2015}). The rate $\Vert \operatorname{vec}(\dot{K}(t) )  \Vert $ decreases as the learning rate $\alpha$ in \eqref{math:dyn_1} is reduced.
\end{remark}

\section{Simulation Results}
\label{sec:4}
In this section, we demonstrate the stability of the closed-loop system and the transient performance of the dynamic controller.
Consider the following system 
\begin{align}
   \label{math:ex_system}
   \begin{split}
   A &= \begin{bmatrix} -\rho(t) & 1 \\ -0.2 & 1\end{bmatrix},~ B = \begin{bmatrix} 1 \\ 0.5 \end{bmatrix}, ~Q = I_2,~ R = 2, 
   \end{split}
\end{align}
where  $K=\begin{bmatrix} K_1& K_2 \end{bmatrix}$, $\rho(t)\in\mathcal{F}^\infty_{\mathcal{P}}$ and $ \mathcal{P}=[0.5,2]$.
By using Lemma~\ref{lemma:stable_poly}, the stability region $\mathcal{K}_\rho$ is characterized by the inequalities $f_1  = K_1 + \frac{K_2}{2}  + \rho -1 >0$ and $f_2  =\frac{K_2 \rho }{2}-\frac{K_2}{5}-\rho-\frac{K_1}{2}+\frac{1}{5}>0 $, and is visualized in Fig.~\ref{fig:1}.
\begin{figure}
   \centering
   \includegraphics[scale=0.75]{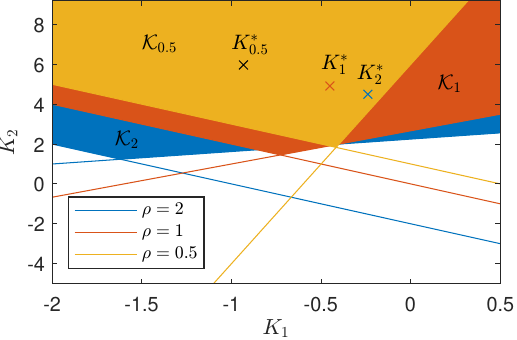}
   \caption{Stability regions $\mathcal{K}_\rho$, optimal feedbacks $K_\rho^*$ and stability boundaries $f_1=f_2 =0$ for $\rho=0.5, \rho=1$ and $\rho=2$.}
   \label{fig:1}
\end{figure}
The set $\mathcal{C}$ obtained by Proposition \ref{prop:1}, with $\mathcal{K}^*\subseteq \operatorname{int}( \mathcal{C})$ (see Remark~\ref{rem:strict}), is given by $\mathcal{C}= [- 0.94, -0.23] \times [4.49, 5.97]$.
Moreover, the condition $\mathcal{C}\subseteq \mathcal{K}_{\mathcal{P}}$ hold due to Proposition \eqref{prop:2} and the obtained distances $d_i$ are depicted by solid black lines in Fig. \ref{fig:2}. 
\begin{figure}
   \centering
   \includegraphics[scale=0.75]{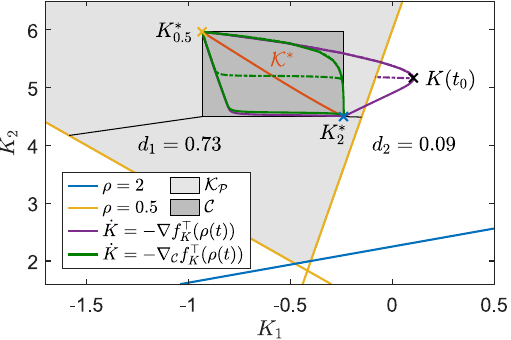}
   \caption{Visualization of the stability region $\mathcal{K}_{\mathcal{P}}$, the set $\mathcal{C}$ and trajectories of gradient flow and projected gradient flow for $\rho=0.5 \to \rho = 2$. }
   \label{fig:2}
\end{figure}
Quadratic stability of \eqref{math:ex_system} is verified by Theorem~\ref{theo:1} and the matrix $X =[0.9, -2.2 ; -2.2, 7.7]$ satisfies the LMI condition~\eqref{math:quad_2} for all $\tilde{A}_i,i=1,\dots,8$.

The trajectories of $\dot K = - \nabla f_K^\top(\rho)$ and $\dot K = - \nabla_{\mathcal{C}} f_K^\top(\rho)$ are shown for $K(0)=K^*_{0.5}$ in Fig.~\ref{fig:2}, where the parameter switches from $\rho=2$ to $\rho=0.5$. For the solid lines, the switch occurs after convergence to $K^*_2$ and $K^*_{0.5}$ is reached.
For the dashed lines, the switch occurs prematurely at $t_0$, where the non-projected gradient flow yields $K(t_0)$ outside the stability region $\mathcal{K}_{0.5}$ (see Fig.~\ref{fig:2})\footnote{This numeric example was specifically selected as a rare case where the non-projected gradient flow encounters ill-definedness.}, leading to ill-defined behavior.
Strictly speaking, the integral of the LQR cost $f_K(\rho(t))$ diverges due to the unstable feedback $K(t_0)$, and the derivative $\nabla f_K(t_0)$ is not defined. However, using vectorization, a negative LQR cost is observed, which is impossible by definition, and the cost diverges to $-\infty$ at the boundary of $\mathcal{K}_{0.5}$, where the trajectory reaches a steady state. In contrast, the projected gradient flow remains well-defined and converges to $K^*_{0.5}$ after the switch at $t_0$.
\begin{figure}
   \centering
   \includegraphics[scale=0.75]{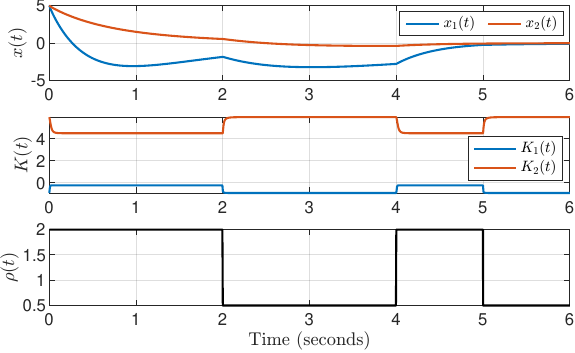}
   \caption{Trajectories of the states $x(t)$ and the feedback gains $K(t)$ for the fast-varying parameter trajectory $\rho(t)$.}
   \label{fig:3}
\end{figure}

Fig.~\ref{fig:3} shows the trajectories of $x(t)$ and $K(t)$ for a fast-varying parameter trajectory $\rho(t)$ with step changes. A high learning rate ($\alpha\!=\! 100$) ensures that $K(t)$ quickly converges to the optimal feedback during constant parameter intervals (see Remark~\ref{rem:5}). 
The dynamic controller yields an LQR cost of 488.9 over the $\SI{6}{\second}$ simulation, significantly lower than the 660.3 obtained with the static feedback controller $K_{1.25}^*$. This $\SI{26}{\percent}$ reduction demonstrates improved transient performance.
\section{Conclusion}
\label{sec:conc}
This paper proposes a novel dynamic state-feedback controller for polytopic LPV systems based on a projected policy gradient flow that continuously minimizes the LQR cost while providing stability guarantees. We established conditions for quadratic stability of the LPV system that ensures exponential stability for arbitrarily fast-varying parameter trajectories. 
Sufficient conditions were also provided to ensure bounded trajectories of the controller and its convergence to the optimal feedback gains.
 Additionally, we proved asymptotic stability of the frozen-time closed-loop systems.  Simulation results showcased the effectiveness of the controller in maintaining stability and improving transient performance.

\appendix

\subsection{Routh-Hurwitz Stability Criterion}
\label{subsec:routh}
The criterion states that a monic polynomial $p(s) = s^n + c_1 s^{n-1} + c_2 s^{n-2} + \ldots + c_{n-1} s + c_n$ has roots with negative real parts if and only if all the entries in the first column of the Routh table are strictly positive \cite[Sec. 6.2-6.5]{nise2019control}. 
The Routh table, with entries denoted by $R_{i,j}$, is constructed as follows
\begin{align}
   \label{math:Routh}
   \begin{split}
   \begin{matrix} 
      s^n & 1 & c_2 & c_4 & c_6 & \cdots  \\ 
      s^{n-1} & c_1 & c_3 & c_5 & \cdots \\
      s^{n-2} & R_{3,1} & R_{3,2} & R_{3,3} & \cdots \\
      s^{n-3} & R_{4,1} & R_{4,2} & R_{4,3} &  \\
      \vdots & \vdots & \vdots & \vdots \\
      s^1 & R_{n,1} & R_{n,2} & \\
      s^0 & c_n & 
   \end{matrix},
\end{split}
\end{align}where $R_{i,j}=\left(R_{i-1,1} R_{i-2,j+1}-R_{i-2,1}R_{i-1,j+1}\right)/ R_{i-1,1}$ for $ 3 \!\leq \! i  \! \leq  \! n$, $1  \! \leq  \! j$,
and missing terms are treated as zero. 

\printbibliography

\end{document}